\pdfoutput=1
\documentclass[ preliminary, copyright,creativecommons]{eptcs}
\usepackage{graphicx}
\usepackage{lipsum}
\usepackage{subfig}
\usepackage{amsmath, amssymb, amsthm, amscd, mathtools}
\usepackage{tikz}
\usepackage{tikz-cd}
\usetikzlibrary{arrows, calc}
\usetikzlibrary{arrows}
\usetikzlibrary{calc}
\usepackage{xcolor}
\usepackage[toc,page]{appendix}
\usepackage{graphicx}

\theoremstyle{definition}
\newtheorem{defn}{Definition}[section]

\theoremstyle{plain}
\newtheorem{theorem}[defn]{Theorem}
\newtheorem{conjecture}[defn]{Conjecture}

\newtheorem{lemma}[defn]{Lemma}
\newtheorem{proposition}[defn]{Proposition}

\theoremstyle{remark}

\theoremstyle{definition}

\usepackage{blindtext}

\newcommand*{\LargerCdot}{\raisebox{-0.8ex}{\scalebox{2}{$\cdot$}}}

\title{On the Cohomology of Contextuality}
\author{Giovanni Car\`{u}
\institute{Quantum Group}
\institute{Department of Computer Science\\
University of Oxford}
\email{giovanni.caru@cs.ox.ac.uk}}

\begin{document}
\maketitle

\begin{abstract}
Recent work by Abramsky and Brandenburger used sheaf theory to give a mathematical formulation of non-locality and contextuality. By adopting this viewpoint, it has been possible to define cohomological obstructions to the existence of global sections. In the present work, we illustrate new insights into different aspects of this theory. We shed light on the power of detection of the cohomological obstruction by showing that it is not a complete invariant for strong contextuality even under symmetry and connectedness restrictions on the measurement cover, disproving a previous conjecture. We generalise obstructions to higher cohomology groups and show that they give rise to a refinement of the notion of cohomological contextuality: different ``levels'' of contextuality are organised in a hierarchy of logical implications. Finally, we present an alternative description of the first cohomology group in terms of torsors, resulting in a new interpretation of the cohomological obstructions.

\end{abstract}

\section{Introduction}

Contextuality is one of the most fundamental and peculiar features of quantum mechanics. Inspired by classic no-go theorems by Bell \cite{Bell} and Kochen-Specker \cite{Kochen}, the development of quantum information has been increasingly influenced by the study of this highly non-classical phenomenon. Recent work by Howard et al.\ has even suggested that it actually represents the source of the power of quantum computing \cite{Howard}. The sheaf-theoretic description of non-locality and contextuality introduced in \cite{Abramsky1} has proved that contextuality is in fact a general mathematical property that goes beyond quantum physics and pervades various domains (e.g. relational databases \cite{Abramsky6} and constraint satisfaction \cite{Abramsky5}). 

This rigorous mathematical formulation has allowed the application of powerful methods of \textbf{sheaf cohomology} to the study of the topological structure of contextuality \cite{Abramsky2, Abramsky3}. Central to this approach is the notion of \textbf{cohomological obstruction to the existence of global sections}, i.e. elements of the first \v{C}ech cohomology group that provide a sufficient (but not necessary) condition for the contextuality of empirical models. Although cohomology has been proved to correctly detect contextuality in various well-studied empirical models such as PR boxes \cite{PR}, GHZ states \cite{Greenberger, Greenberger2, Mermin2}, the Peres-Mermin ``magic'' square \cite{Mermin3, Mermin4, Peres} and the whole class of models admitting ``All-vs-Nothing'' arguments \cite{Abramsky2} , there is evidence of a restricted number of false positives (e.g. the Hardy model \cite{Hardy}). 


In the present paper, we illustrate new insights into the properties of cohomological obstructions with the ultimate goal of understanding how such false positives arise. In particular, we aim to give an answer to some of the open questions left by \cite{Abramsky3}: ``Is the cohomological obstruction a full invariant for strong contextuality under suitable restrictions on the measurement scenario?''; ``Can \textbf{higher cohomology groups} be used for the study of contextuality?''; ``Is there a concrete way of describing cohomological obstructions?''.
\newpage
We briefly outline our results:
\begin{itemize}
\item We disprove Conjecture 8.1 of \cite{Abramsky3} by providing an explicit example of a strongly contextual but cohomologically non-contextual empirical model defined on a simple $(2,2,4)$ Bell-type scenario which verifies any reasonably strong form of connectedness and symmetry condition. 
\item We generalise cohomological obstructions to higher cohomology groups. It turns out that this procedure can be done in a natural way only in odd dimensions. We obtain a refinement of the notion of cohomological contextuality: for each $q\ge 0$, we say that a model is $q$-cohomologically contextual if the $q$-th obstruction does not vanish.

\item We show that higher obstructions are organised in a precise hierarchy of logical implications. We also prove that, unfortunately, this result cannot be used in the study of contextuality in no-signalling empirical models. Despite this fact, higher obstructions could potentially be used to study signalling properties.

\item We give a new description of the first cohomology group (thus, in particular, of the cohomology obstructions) using torsors relative to a presheaf.

\end{itemize}

The paper is organised as follows. We summarise the sheaf viewpoint from \cite{Abramsky1} in Section \ref{sec: sheaf}, and recall the main definitions concerning sheaf cohomology in Section \ref{sec: sheaf cohomology}. Section \ref{sec: counterexample} features the counterexample to Conjecture 8.1 of \cite{Abramsky3}. We generalise cohomological obstructions to higher cohomology groups in Section \ref{sec: higher cohomology groups}. Finally, in Section \ref{sec: final}, we present torsors relative to a presheaf, and their relation to the first sheaf cohomology group.

\section{The sheaf-theoretic framework}\label{sec: sheaf}
In this section we recall the main definitions of the sheaf-theoretic approach to non-locality and contextuality \cite{Abramsky1}.

We start by considering a finite discrete space $X$, which can be seen as a set of measurement labels. We define a \textbf{measurement cover} as an antichain $\mathcal{M}=\{C_i\}_{i\in I}$  that satisfies $\bigcup_{i\in I}C_i=X$. This family contains the maximal sets of measurements that can be jointly performed, called \textbf{measurement contexts}. The set $X$, together with the cover $\mathcal{M}$ and a fixed finite \textbf{set of outcomes} $O$, constitute the \textbf{measurement scenario} $\langle X,\mathcal{M}, O\rangle$, which represents the basic setting of the experiment we aim to study.

We consider $X$ as a discrete topological space and define the \textbf{sheaf of events} $$\mathcal{E}:\textbf{Open}(X)^{op}=\mathcal{P}(X)^{op}\rightarrow\textbf{Set},$$ where $\mathcal{E}(U):=O^U$ for each subset $U\subseteq X$, and restriction maps coincide with function restriction: for $U\subseteq U'\subseteq X$, we have
\[
\rho_U^{U'}:=\mathcal{E}(U\subseteq U'):O^{U'}\rightarrow O^U::s\mapsto s\mid_U.
\]
Each $s\in\mathcal{E}(U)$ is called a \textbf{section}, in particular $g\in\mathcal{E}(X)$ is a \textbf{global section}. 

A \textbf{probabilistic empirical model} $e$ is a compatible family $\{e_C\}_{C\in\mathcal{M}}$, where $e_C$ is a probability distribution over $\mathcal{E}(C)$.\footnote{Here, compatibility involves the notion of restriction on distributions which is not defined in this paper since it is not needed (cf. \cite[\S 2.5]{Abramsky1}).} In this paper we will only consider \textbf{possibilistic empirical models}, i.e. the ones generated by the support of a probabilistic model.

%
Such models can be defined as subpresheaves $\mathcal{S}$ of $\mathcal{E}$ that verify the following properties:
\begin{enumerate}
\item\label{cond1} $\mathcal{S}(C)\neq \emptyset$ for all $C\in\mathcal{M}$
\item\label{cond2} $\mathcal{S}$ is \textbf{flasque beneath the cover}, i.e. the map $\mathcal{S}(U\subseteq U')$ is surjective whenever $U\subseteq U'\subseteq C$ for some $C\in\mathcal{M}$. 
\item\label{cond3} Every family $\{s_C\in\mathcal{S}(C)\}_{C\in\mathcal{M}}$ which is compatible (i.e. such that $s_C\mid_{C\cap C'}=s_{C'}\mid_{C\cap C'}$ for all $C, C'\in\mathcal{M}$) induces a global section in $\mathcal{S}(X)$. Note that this global section is unique since $\mathcal{S}$ is a subpresheaf of the sheaf $\mathcal{E}$. 
\end{enumerate}
These conditions state that $\mathcal{S}$ is completely determined by its values $\mathcal{S}(C)$ at each context $C\in\mathcal{M}$: values $\mathcal{S}(U)$ below the cover are fixed by flaccidity, and values for subsets $U$ above the cover are determined by condition \ref{cond3}. Flaccidity beneath the cover can also be interpreted as a possibilistic version of no-signalling. 

Contextuality of an empirical model $\mathcal{S}$ on a measurement scenario $\langle X, \mathcal{M}, O\rangle$ can be characterised as follows
\begin{itemize}
\item Given a context $C\in\mathcal{M}$ and a section $s\in\mathcal{S}(C)$, $\mathcal{S}$ is \textbf{logically contextual at $s$}, or $\mbox{LC}(\mathcal{S}, s)$, if $s$ is not a member of any compatible family. We say that $\mathcal{S}$ is \textbf{logically contextual}, or $\mbox{LC}(\mathcal{S})$, if $\mbox{LC}(\mathcal{S}, s)$ for some possible section $s$. 
\item $\mathcal{S}$ is \textbf{strongly contextual}, or $\mbox{SC}(\mathcal{S})$, if $\mbox{LC}(\mathcal{S}, s)$ for all $s$. In other words there is no global section ($\mathcal{S}(X)=\emptyset$).

\end{itemize}

\subsection{Bundle diagrams}

The structure of the measurement cover can equivalently be described as an abstract simplicial complex having measurements as vertices \cite{Barbosa, Rui}. A set of vertices forms a face whenever the corresponding measurements can be jointly performed, hence contexts correspond to facets of the complex. This viewpoint allows us to graphically represent simple possibilistic empirical models as \textbf{bundle diagrams}. In figure \ref{fig: diagram} we have depicted the bundle diagram of a simple $(2,2,2)$ Bell-type scenario 
involving two agents Alice and Bob who can choose between two binary measurements each ($a_1, a_2$ for Alice and $b_1,b_2$ for Bob). 
\begin{figure}[htbp]
\centering
\includegraphics[scale=0.7]{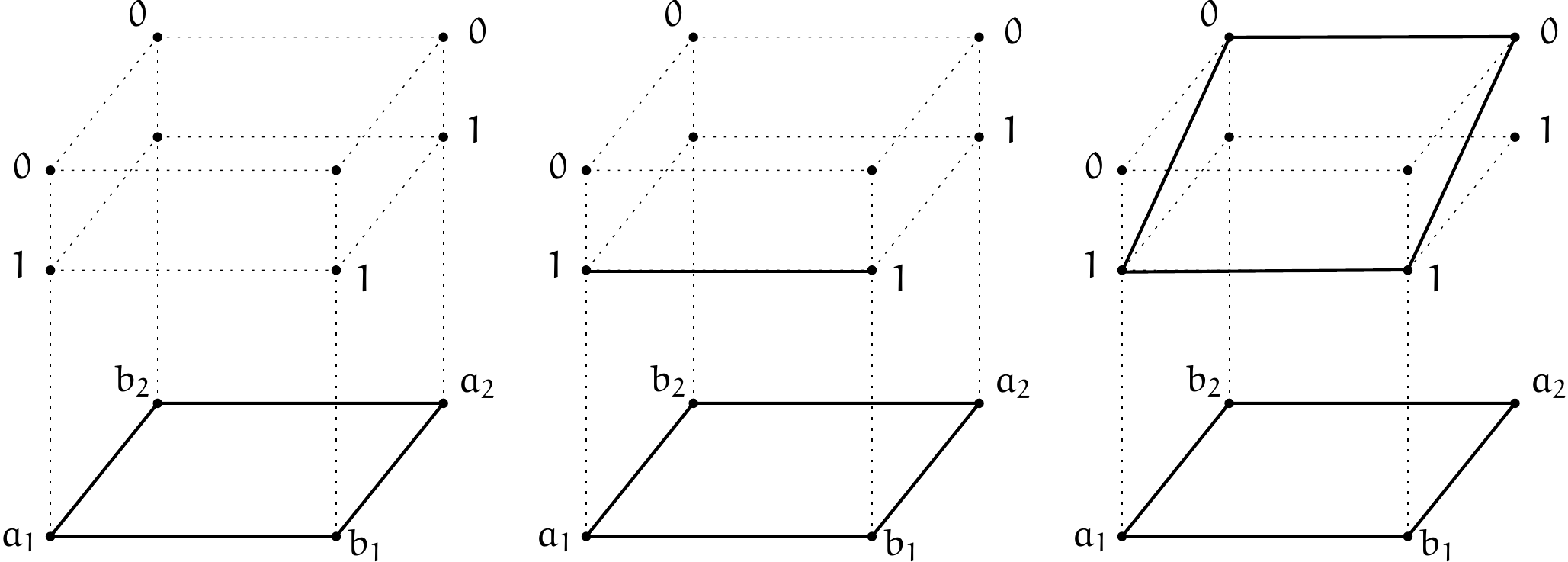}
\caption{A $(2,2,2)$ Bell-type scenario. The section $(a_1,b_1)\mapsto(1,1)$ is represented in the centre. On the right, the global section $(a_1, b_1, a_2,b_2)\mapsto (1,1,0,0)$}\label{fig: diagram}
\end{figure}
The measurement simplicial complex lies at the base of the bundle, and above each vertex is a fibre of the values that can be assigned to each measurement ($0$ or $1$ in this case).  A possible section is represented by an edge connecting the outcomes involved above the corresponding context as in the central diagram of the figure. 
No-signalling corresponds to the property that each edge above a context can be extended to at least one edge above each adjacent context. A global section is represented by a closed path traversing all the fibers exactly once as shown on the right-hand side of Figure \ref{fig: diagram}.

Using this handy representation, we can have an immediate feedback on the contextuality of empirical models. As an example, consider the Hardy model \cite{Hardy, Hardy2} and the Popescu-Rohrlich (PR) box model \cite{PR} represented in Figure \ref{fig: Hardy PR}.

\begin{table}[htbp]
\centering
\subfloat[Hardy model]{
\begin{tabular}{c c | c c c c}
\hline
$A$ & $B$ & $(0,0)$ & $(1,0)$ & $(0,1)$ & $(1,1)$\\
\hline
$a_1$ & $b_1$ & $1$ & $1$ & $1$ & $1$\\
$a_1$ & $b_2$ & $0$ &  $1$ &  $1$ &  $1$\\
$a_2$ & $b_1$ &  $0$ &  $1$ &  $1$ &  $1$\\
$a_2$ & $b_2$ &  $1$ &  $1$ &  $1$ &  $0$\\
\end{tabular}}\quad\quad
\subfloat[PR-Box model]{
\begin{tabular}{c c | c c c c}
\hline
$A$ & $B$ & $(0,0)$ & $(1,0)$ & $(0,1)$ & $(1,1)$\\
\hline
$a_1$ & $b_1$ & $1$ & $0$ & $0$ & $1$\\
$a_1$ & $b_2$ & $1$ &  $0$ &  $0$ &  $1$\\
$a_2$ & $b_1$ &  $1$ &  $0$ &  $0$ &  $1$\\
$a_2$ & $b_2$ &  $0$ &  $1$ &  $1$ &  $0$\\
\end{tabular}}
\end{table}
We can clearly see that the section $(a_1,b_1)\mapsto (0,0)$ in the Hardy bundle, marked in red, is not part of any compatible family, hence the model is logically contextual. However, it is not strongly contextual since there is a global section $(a_,b_1, a_2, b_2)\mapsto (1,1,0,0)$ (marked in blue). On the other hand, all the sections in the PR-Box bundle are not part of any compatible family, which means that the model is strongly contextual. 

\begin{figure}[htbp]
\centering
\includegraphics[scale=0.56]{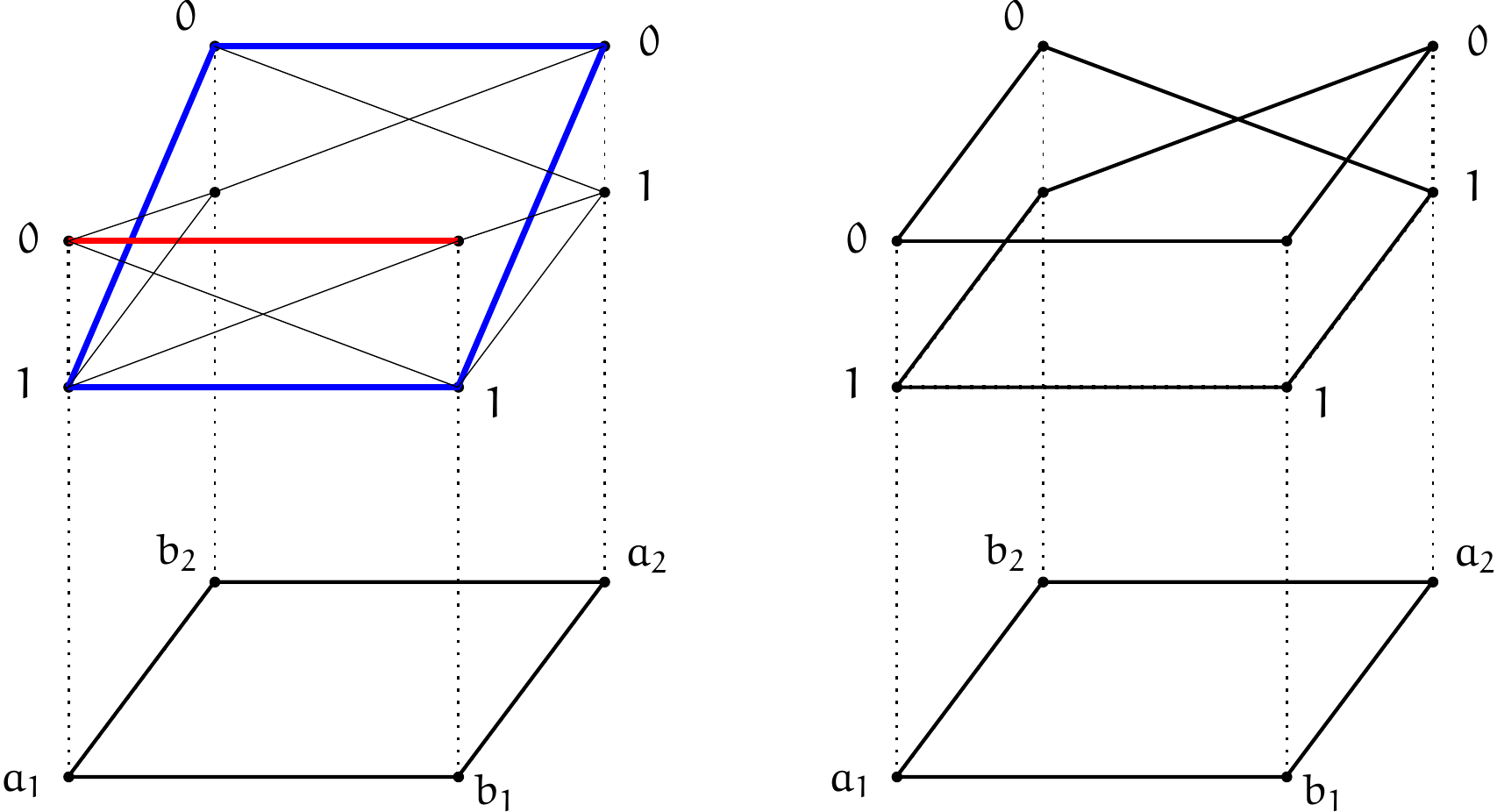}
\caption{The Hardy model and the PR-Box model as bundle diagrams.}\label{fig: Hardy PR}
\end{figure}
\section{Sheaf cohomology}\label{sec: sheaf cohomology}
We recall the main results of \cite{Abramsky2, Abramsky3} concerning cohomological obstructions to the existence of global sections.

Consider a measurement scenario $\langle X, \mathcal{M}, O\rangle$ and an empirical model $\mathcal{S}$ defined on it. We define a presheaf of abelian groups $\mathcal{F}:\mathcal{P}(X)^{op}\rightarrow\textbf{AbGrp}$ that \emph{represents} $\mathcal{S}$. Explicitly, this means that $\mathcal{F}$ verifies conditions \ref{cond1}, \ref{cond2} and \ref{cond3}, and that there is an injection $i:\mathcal{S}\hookrightarrow \mathcal{F}$ such that $i_C(s_C)\neq 0\in\mathcal{F}(C)$ for all $C\in\mathcal{M}$ and for each $s_C\in\mathcal{S}(C)$. Typically, $\mathcal{F}:=F_\mathbb{Z}\mathcal{S}$ is used, where $F_\mathbb{Z}:\textbf{Set}\rightarrow \textbf{AbGrp}$ is the functor that assigns to a set $X$ the free abelian group $F_\mathbb{Z}(X)$ generated by it.\footnote{More generally, the functor $F_R$ can be used for any ring $R$.}

A $q$-\textbf{simplex of the nerve} of $\mathcal{M}$ is a tuple $\sigma=(C_0,\dots, C_q)$ of elements of $\mathcal{M}$ such that $|\sigma|:=\cap_{i=0}^qC_i\neq\emptyset$. The set of $q$-simplices is denoted by $\mathcal{N}(\mathcal{M})^q$. The \textbf{nerve} $\mathcal{N}(\mathcal{M})$ is the abstract simplicial complex generated by all the $\mathcal{N}(\mathcal{M})^q$s. For all $q\ge0$ and each $0\leq j\leq q$, we can define the maps $\partial_j:\mathcal{N}(\mathcal{M})^{q+1}\rightarrow\mathcal{N}(\mathcal{M})^q$ by the expression 
\[
\partial_j(C_0,\dots, C_{q+1}):=(C_0,\dots, C_{j-1}, \hat{C_j}, C_{j+1},\dots, C_{q+1}).
\]
This allows us to define the \textbf{augmented \v{C}ech cochain complex}
\[
0\xrightarrow{~0~} C^0(\mathcal{M}, \mathcal{F})\xrightarrow{\delta^0}C^1(\mathcal{M}, \mathcal{F})\xrightarrow{\delta^1}\dots
\]
where, for all $q\ge 0$,
\[
C^q(\mathcal{M},\mathcal{F}):=\bigoplus_{\sigma\in\mathcal{N}(\mathcal{M})^q}\mathcal{F}(|\sigma|)
\]
is the abelian group of $q$-\textbf{cochains}, and $\delta^q:C^q(\mathcal{M},\mathcal{F})\rightarrow C^{q+1}(\mathcal{M},\mathcal{F})$ defined by
\[
\delta^q(\omega)(\sigma):=\sum_{j=0}^{q+1}(-1)^j\rho_{|\sigma|}^{|\partial_j\sigma|}(\omega(\partial_j\sigma))~~\forall\omega\in C^q(\mathcal{M},\mathcal{F}),~\forall\sigma\in\mathcal{N}(\mathcal{M})^q
\]
is the $q$-\textbf{th coboundary map}, where $\rho_{|\sigma|}^{|\partial_j\sigma|}$ denotes the restriction homomorphism $\mathcal{F}(|\sigma|\subseteq|\partial_j\sigma|)$. \textbf{\v{C}ech cohomology} $\check{H}^\ast(\mathcal{M},\mathcal{F})$ is defined as the cohomology of this augmented cochain complex.

We assume that $\mathcal{M}$ is a \textbf{connected cover}, i.e.\ given $C,C'\in\mathcal{M}$ there exists a sequence of contexts $C=C_0,C_1,\dots, C_n=C'$ such that $C_i\cap C_{i+1}\neq\emptyset$.\footnote{From now on, all the covers will be assumed to be connected.} Thanks to this assumption, cocycles in $Z^0(\mathcal{M},\mathcal{F})\cong\check{H}^0(\mathcal{M},\mathcal{F})$ correspond to compatible families $\{r_C\in\mathcal{F}(C)\}_{C\in\mathcal{M}}$ (i.e. such that $r_C\mid_{C\cap C'}=r_{C'}\mid_{C\cap C'}$ for all $C,C'\in\mathcal{M}$).\footnote{Where $r_C\mid_{C\cap C'}$ is an equivalent notation for $\rho^C_{C\cap C'}(r_C)=\mathcal{F}(C\cap C'\subseteq C)(r_C)$.} 

%

In order to study the extendability of a local section at a fixed context $C_0\in\mathcal{M}$, we shall define the \textbf{relative cohomology of} $\mathcal{F}$. We introduce two auxiliary presheaves. Firstly
\[
\mathcal{F}\mid_{C_0}:\textbf{Open}(X)^{op}\rightarrow\textbf{AbGrp}::U\mapsto \mathcal{F}(U\cap C_0).
\]
The restriction to $C_0$ yields an obvious morphism of sheaves $p^{C_0}:\mathcal{F}\Rightarrow\mathcal{F}\mid_{C_0}$ defined by
\[
p^{C_0}_U:\mathcal{F}(U)\rightarrow\mathcal{F}\mid_{C_0}(U)::r\mapsto r\mid_{C_0\cap U}.
\]
Notice that each $p^{C_0}_U$ is surjective since $\mathcal{F}$ is flasque beneath the cover and $U\cap C_0\subseteq C_0\in\mathcal{M}$. 
The second auxiliary functor is defined by $\mathcal{F}_{\tilde{C}_0}(U):=\ker(p_U^{C_0})$. Thus, we have an exact sequence of presheaves
\begin{equation}\label{equ: exact sequence}
\textbf{0}\Longrightarrow\mathcal{F}_{\tilde{C}_0}\Longrightarrow\mathcal{F}\xRightarrow{~p^{C_0}}\mathcal{F}\mid_{C_0},
\end{equation}
which can be lifted to cochains to
\[
0\longrightarrow C^0(\mathcal{M}, \mathcal{F}_{\tilde{C}_0})\xhookrightarrow{~~~~~~~}C^0(\mathcal{M}, \mathcal{F})\xrightarrow{\bigoplus_Cp_C^{C_0}}C^0(\mathcal{F}, \mathcal{F}\mid_{C_0}),\longrightarrow 0,
\]
where exactness on the right is given by surjectivity of all the $p_C^{C_0}$. The map $\delta^0$ can be correstricted to a map $\tilde{\delta^0}:=\delta^0\mid^{Z^1(\mathcal{M},\mathcal{F})}$ whose kernel is $Z^0(\mathcal{M},\mathcal{F})\cong\check{H}^0(\mathcal{M},\mathcal{F})$ and whose cokernel is isomorphic to $\check{H}^1(\mathcal{M},\mathcal{F})$ (the same procedure can be applied to $\mathcal{F}\mid_{C_0}$ and $\mathcal{F}_{\tilde{C}_0}$). Therefore, by applying the snake lemma to
\begin{center}
  \raisebox{-0.5\height}{\includegraphics{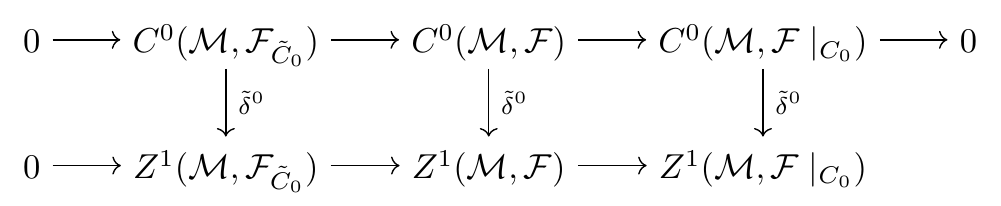}}\\
  \end{center}
%
%
%
we obtain an exact sequence

 \begin{center}
\begin{tikzpicture}[descr/.style={fill=white,inner sep=1.5pt}]
        \matrix (m) [
            matrix of math nodes,
            row sep=2.5em,
            column sep=2.5em,
            text height=1.5ex, text depth=0.25ex
        ]
        { \check{H}^0(\mathcal{M},\mathcal{F}_{\tilde{C}_0}) & \check{H}^0(\mathcal{M},\mathcal{F}) & \check{H}^0(\mathcal{M},\mathcal{F}\mid_{C_0}) \\
        \check{H}^1(\mathcal{M},\mathcal{F}_{\tilde{C}_0}) & \check{H}^1(\mathcal{M},\mathcal{F}) & \check{H}^1(\mathcal{M},\mathcal{F}\mid_{C_0})\\
        };

        \path[overlay,->, font=\scriptsize,>=latex]
        (m-1-1) edge node[auto] {} (m-1-2)
        (m-1-2) edge node[auto] {} (m-1-3)
        (m-1-3) edge[out=355,in=175] node[descr,yshift=0.3ex] {$\gamma_{C_0}$} (m-2-1)
        (m-2-1) edge node[auto] {} (m-2-2)
        (m-2-2) edge node[auto] {} (m-2-3);
\end{tikzpicture}
\end{center}
where the ``snake'' homomorphism $\gamma_{C_0}$ is called the \textbf{connecting homomorphism} relative to the context $C_0$. We have $\mathcal{F}(C_0)\cong\check{H}^0(\mathcal{M}, \mathcal{F}\mid_{C_0})$ 
via the isomorphism
\begin{equation}\label{equ: homomorphism}
\psi^0:\mathcal{F}(C_0)\rightarrow Z^0(\mathcal{M}, \mathcal{F}\mid_{C_0}):: r_{C_0}\mapsto (r_{C_0}\mid_{C_0\cap C})_{C\in\mathcal{M}}.
\end{equation}
Thus, given an element $r_0\in\mathcal{F}(C_0)$, it makes sense to define the \textbf{cohomological obstruction} of $r_0$ as the element $\gamma_{C_0}(r_0)\in\check{H}^1(\mathcal{M},\mathcal{F}_{\tilde{C}_0})$. 

We have the following key result from \cite{Abramsky3}:

\begin{proposition}\label{prop: main}
Let $\mathcal{M}$ be a connected cover, $C_0\in\mathcal{M}$ and $r_0\in\mathcal{F}(C_0)$. Then, $\gamma_{C_0}(r_0)=0$ if and only if there exists a compatible family $\{r_C\in\mathcal{F}(C)\}_{C\in\mathcal{M}}$ such that $r_{C_0}=r_0$. 
\end{proposition}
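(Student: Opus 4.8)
The plan is to deduce the statement directly from exactness of the snake-lemma sequence assembled just above, once every term is translated into explicit cochain-level data. First I would use the isomorphism $\psi^0$ of~\eqref{equ: homomorphism} to identify $r_0\in\mathcal{F}(C_0)$ with the $0$-cocycle $\psi^0(r_0)=(r_0\mid_{C_0\cap C})_{C\in\mathcal{M}}\in Z^0(\mathcal{M},\mathcal{F}\mid_{C_0})\cong\check{H}^0(\mathcal{M},\mathcal{F}\mid_{C_0})$, so that the cohomological obstruction $\gamma_{C_0}(r_0)$ means, by definition, $\gamma_{C_0}(\psi^0(r_0))$. Exactness of the sequence at $\check{H}^0(\mathcal{M},\mathcal{F}\mid_{C_0})$ then says exactly that $\gamma_{C_0}(r_0)=0$ if and only if $\psi^0(r_0)$ lies in the image of the map $\check{H}^0(\mathcal{M},\mathcal{F})\to\check{H}^0(\mathcal{M},\mathcal{F}\mid_{C_0})$ induced by $p^{C_0}$.

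Next I would make this image condition concrete. That map is induced on cohomology by the cochain map $\bigoplus_C p^{C_0}_C$, so on $\check{H}^0(\mathcal{M},\mathcal{F})=Z^0(\mathcal{M},\mathcal{F})$ --- which, as recalled above, is the group of compatible families $\{r_C\in\mathcal{F}(C)\}_{C\in\mathcal{M}}$ --- it acts by $(r_C)_C\mapsto(r_C\mid_{C_0\cap C})_C$. Hence $\psi^0(r_0)$ is in the image if and only if there is a compatible family $(r_C)_C$ with $r_C\mid_{C_0\cap C}=r_0\mid_{C_0\cap C}$ for all $C\in\mathcal{M}$. It then remains to observe that this is equivalent to the existence of a compatible family with $r_{C_0}=r_0$: the forward direction follows by taking $C=C_0$, where the restriction $\rho^{C_0}_{C_0}$ is the identity on $\mathcal{F}(C_0)$, and the converse follows from compatibility, since then $r_C\mid_{C\cap C_0}=r_{C_0}\mid_{C\cap C_0}=r_0\mid_{C\cap C_0}$.

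Alternatively, and perhaps more transparently, one can sidestep the snake sequence and unwind the zig-zag description of $\gamma_{C_0}$ by hand: $\gamma_{C_0}(r_0)$ is the class in $\check{H}^1(\mathcal{M},\mathcal{F}_{\tilde{C}_0})$ of $\delta^0(\tilde r)$, where $\tilde r\in C^0(\mathcal{M},\mathcal{F})$ is any lift of $\psi^0(r_0)$ along the surjection $\bigoplus_C p^{C_0}_C$ (in particular $\tilde r_{C_0}=r_0$, and $\delta^0(\tilde r)$ does land in $Z^1(\mathcal{M},\mathcal{F}_{\tilde{C}_0})$ because its image in $C^1(\mathcal{M},\mathcal{F}\mid_{C_0})$ is $\delta^0(\psi^0(r_0))=0$). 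For the ``if'' direction, a compatible family with $r_{C_0}=r_0$ is itself such a lift, and $\delta^0$ vanishes on it, so $\gamma_{C_0}(r_0)=0$. For ``only if'', $\gamma_{C_0}(r_0)=0$ means $\delta^0(\tilde r)=\delta^0(\eta)$ for some $\eta\in C^0(\mathcal{M},\mathcal{F}_{\tilde{C}_0})$, i.e.\ $\eta_C\mid_{C_0\cap C}=0$ for every $C$; then $r_C:=\tilde r_C-\eta_C$ is a compatible family with $r_C\mid_{C_0\cap C}=r_0\mid_{C_0\cap C}$, so in particular $r_{C_0}=\tilde r_{C_0}-\eta_{C_0}=r_0$.

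I expect the proof to be essentially bookkeeping, with no deep obstacle. The one point that needs genuine care is to match the abstract maps produced by the snake lemma with their concrete descriptions as restriction operations --- in particular, that $\check{H}^0(\mathcal{M},\mathcal{F})\to\check{H}^0(\mathcal{M},\mathcal{F}\mid_{C_0})$ really is ``restrict each component to its intersection with $C_0$'' --- and to confirm that ``compatible family restricting componentwise to $\psi^0(r_0)$'' and ``compatible family with $r_{C_0}=r_0$'' name literally the same condition, which, as seen above, hinges only on $\rho^{C_0}_{C_0}$ being the identity.
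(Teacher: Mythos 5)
Your proposal is correct. Note that the paper itself offers no proof of this proposition --- it is imported verbatim as a ``key result'' from \cite{Abramsky3} --- so there is nothing internal to compare against; your argument (identifying $\check{H}^0(\mathcal{M},\mathcal{F})$ with compatible families via connectedness, using exactness of the snake sequence at $\check{H}^0(\mathcal{M},\mathcal{F}\mid_{C_0})$, and equally your explicit zig-zag unwinding of $\gamma_{C_0}$ with a lift $\tilde r$ and a correction $\eta\in C^0(\mathcal{M},\mathcal{F}_{\tilde{C}_0})$ whose $C_0$-component vanishes) is precisely the standard proof given in that cited source, with the bookkeeping points ($\rho^{C_0}_{C_0}=\mathrm{id}$, independence of the lift) handled correctly.
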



Given an empirical model $\mathcal{S}$ and a local section $s_0\in\mathcal{S}(C_0)$, we define the following notions
\begin{itemize}
\item $\mathcal{S}$ is \textbf{cohomologically logically contextual at $s_0$}, or $\mbox{CLC}(\mathcal{S}, s_0)$, if $\gamma_{C_0}(s_0)\neq 0$. We say that $\mathcal{S}$ is \textbf{cohomologically logically contextual}, or $\mbox{CLC}(\mathcal{S})$, if $\mbox{CLC}(\mathcal{S}, s)$ for some section $s$. 
\item $\mathcal{S}$ is \textbf{cohomologically strongly contextual}, or $\mbox{\mbox{CSC}}(\mathcal{S})$, if $\mbox{CLC}(\mathcal{S}, s)$ for all sections $s$.
\end{itemize}

The main result of \cite{Abramsky3} provides a sufficient condition for an empirical model to be contextual:
\begin{theorem}\label{thm: main}
Let $\mathcal{S}$ be an empirical model. We have $\rm{CLC}(\mathcal{S})\Rightarrow \mbox{LC}(\mathcal{S})$ and $
\mbox{\mbox{CSC}}(\mathcal{S})\Rightarrow \mbox{SC}(\mathcal{S})$.
\end{theorem}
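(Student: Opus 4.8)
The plan is to deduce both implications from Proposition~\ref{prop: main} together with the fact that the representing map $i\colon\mathcal{S}\hookrightarrow\mathcal{F}$ is a morphism of presheaves. First I would observe that it suffices to prove the pointwise statement: for every context $C_0\in\mathcal{M}$ and every section $s_0\in\mathcal{S}(C_0)$,
\[
\mathrm{CLC}(\mathcal{S},s_0)\ \Longrightarrow\ \mathrm{LC}(\mathcal{S},s_0).
\]
Indeed, $\mathrm{CLC}(\mathcal{S})$ means $\mathrm{CLC}(\mathcal{S},s)$ holds for some $s$, so the pointwise implication gives $\mathrm{LC}(\mathcal{S},s)$ for that $s$, hence $\mathrm{LC}(\mathcal{S})$; and $\mathrm{CSC}(\mathcal{S})$ means $\mathrm{CLC}(\mathcal{S},s)$ for \emph{all} $s$, so the pointwise implication gives $\mathrm{LC}(\mathcal{S},s)$ for all $s$, which is exactly $\mathrm{SC}(\mathcal{S})$ (equivalently $\mathcal{S}(X)=\emptyset$).

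Then I would prove the pointwise statement by contraposition. Assume $\neg\,\mathrm{LC}(\mathcal{S},s_0)$, i.e.\ $s_0$ is a member of a compatible family $\{s_C\in\mathcal{S}(C)\}_{C\in\mathcal{M}}$ with $s_{C_0}=s_0$. Applying $i$ componentwise yields the family $\{i_C(s_C)\in\mathcal{F}(C)\}_{C\in\mathcal{M}}$, and since $i$ is natural it commutes with the restriction homomorphisms, so for all $C,C'\in\mathcal{M}$
\[
i_C(s_C)\mid_{C\cap C'}=i_{C\cap C'}\big(s_C\mid_{C\cap C'}\big)=i_{C\cap C'}\big(s_{C'}\mid_{C\cap C'}\big)=i_{C'}(s_{C'})\mid_{C\cap C'},
\]
so $\{i_C(s_C)\}_{C\in\mathcal{M}}$ is a compatible family in $\mathcal{F}$ whose value at $C_0$ is $r_0:=i_{C_0}(s_0)$, which is precisely the element of $\mathcal{F}(C_0)$ whose obstruction defines $\gamma_{C_0}(s_0)$. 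By Proposition~\ref{prop: main} applied to $r_0$, the existence of such a compatible family forces $\gamma_{C_0}(s_0)=\gamma_{C_0}(r_0)=0$, i.e.\ $\neg\,\mathrm{CLC}(\mathcal{S},s_0)$. This establishes the pointwise implication, and hence the theorem.

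The proof is essentially a matter of unwinding definitions once Proposition~\ref{prop: main} is in hand, so there is no serious obstacle; the only point requiring care is the bookkeeping involved in transporting a compatible family from $\mathcal{S}$ to $\mathcal{F}$ along $i$ and checking that its value at $C_0$ is exactly the representative used to build the obstruction. It is worth emphasising that the argument is genuinely one-directional: the converse would require a compatible family of group elements in $\mathcal{F}$ to arise from a compatible family of \emph{sections} of $\mathcal{S}$, which need not hold, and this gap is precisely what allows cohomologically trivial yet (logically or strongly) contextual models such as the Hardy model.
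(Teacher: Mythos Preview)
Your proof is correct. The paper does not actually supply a proof of this theorem: it is quoted as ``the main result of \cite{Abramsky3}'' and left unproved here, so there is no in-paper argument to compare against. Your contrapositive reduction to the pointwise implication $\mathrm{CLC}(\mathcal{S},s_0)\Rightarrow\mathrm{LC}(\mathcal{S},s_0)$, followed by pushing a compatible family along the presheaf morphism $i$ and invoking Proposition~\ref{prop: main}, is exactly the standard argument and is complete as written.
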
 
However, it is sufficient to consider the Hardy model to show that this condition is not necessary \cite{Abramsky3}. 
Bundle diagrams can be used again to understand how such false positives arise. As an example, consider the diagram of the Hardy model in Figure \ref{fig: Hardy2}.
\begin{figure}[htbp]
\centering
\includegraphics[scale=0.56]{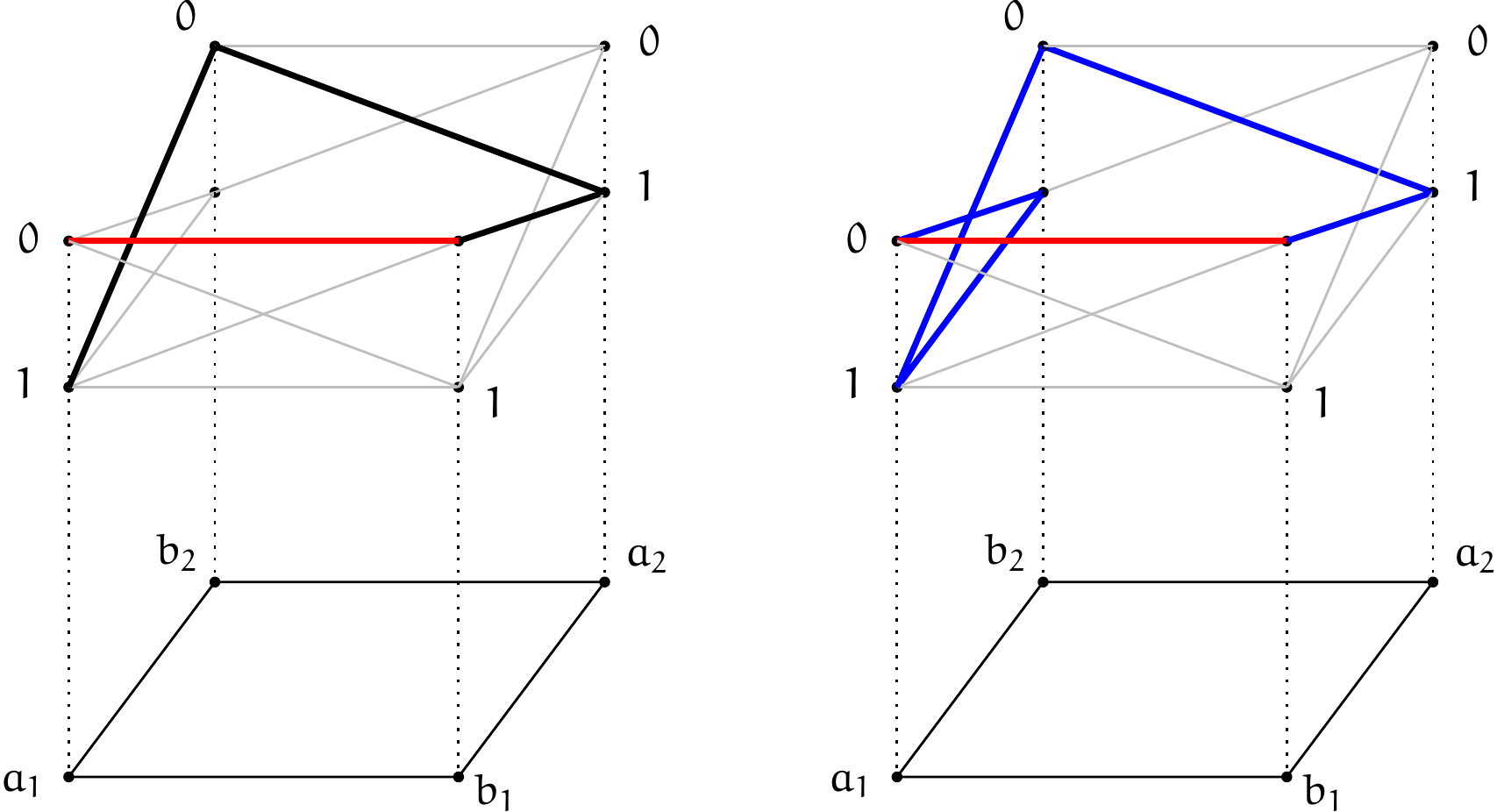}
\caption{The Hardy model is \mbox{LC} but not \mbox{CLC}}\label{fig: Hardy2}
\end{figure}
On the left-hand side we graphically show that the red section $s:=(a_1,b_1)\mapsto (0,0)$ cannot be extended to a compatible family, proving local contextuality at $s$. However, when considering the presehaf of abelian groups $\mathcal{F}:=F_\mathbb{Z}\mathcal{S}$ needed to define cohomological obstructions, we are allowed to take formal linear combinations of sections over the same context. Thus, it is possible to obtain closed paths like the one in blue on the right-hand side of Figure \ref{fig: Hardy2}, which is explicitly defined by
\[
\{s, (a_2,b_1)\mapsto (1,0), (a_2, b_2)\mapsto(1,0), [(a_1,b_2)\mapsto (1,0)]-[(a_1,b_2)\mapsto (1,1)]+[(a_1,b_2)\mapsto (0,1)]\},
\]
and represents a compatible family for $\mathcal{F}$.
The existence of such a closed path shows why cohomology cannot detect contextuality in this case.
\section{A false positive for strong contextuality}\label{sec: counterexample}

In \cite{Abramsky3}, it is brought to attention that, although cohomology can fail to detect logical contextuality as in the case of the Hardy model, it is rather difficult to construct a strongly contextual false positive. Indeed, cohomology is able to detect the strong contextuality of a variety of well-known models, including GHZ states \cite{Greenberger, Greenberger2, Mermin2}, PR Boxes \cite{PR}, the Peres-Mermin ``magic'' square \cite{Mermin3, Mermin4, Peres}, all $\neg GCD$ models \cite{Abramsky3}, and the whole class of models admitting All-vs-Nothing arguments \cite{Abramsky2}. The only known example of a strongly contextual false positive is the Kochen-Specker model \cite{Kochen, ShaneRui} for the cover
\[
\{A,B,C\}, \{B, D, E\}, \{C, D, E\}, \{A, D, F\}, \{A, E, G\},
\]
which ``does not satisfy any reasonable criterion for symmetry, nor does it satisfy any strong form of connectedness'' and where ``the existence of measurements belonging to a single context, namely $F$ and $G$, seems to be crucial'' \cite{Abramsky3}. Due to these limitations, the following conjecture was made:
\begin{conjecture}[Conjecture 8.1 of \cite{Abramsky3}]

Under suitable assumptions of symmetry and connectedness of the cover, the cohomology obstruction is a complete invariant for strong contextuality.
\end{conjecture}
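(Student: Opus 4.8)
The plan is to \emph{refute} the conjecture by exhibiting an empirical model that is strongly contextual yet whose cohomological obstruction vanishes, on a scenario meeting any reasonable symmetry and connectedness requirement. The natural arena is a $(2,k,l)$ Bell-type scenario, whose cover $\mathcal{M}=\{\{a_i,b_j\} : 1\le i,j\le k\}$ is connected — any two contexts are linked by a chain of pairwise-overlapping contexts — and which carries the full relabelling symmetry: swapping the two parties, independently permuting each party's $k$ settings, and permuting the $l$ outcomes. No reasonable symmetry or connectedness hypothesis can rule out models on such a scenario. The known two-outcome strongly contextual models here (PR boxes, GHZ-type models) are all cohomologically detected, so I would enlarge the outcome set and work on a $(2,2,4)$ scenario.

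First I would write down a candidate possibilistic model $\mathcal{S}$ on $\langle X,\mathcal{M},O\rangle$, $|O|=4$, by listing the four supports $\mathcal{S}(\{a_i,b_j\})\subseteq O^{\{a_i,b_j\}}$, subject to three book-keeping constraints: each support must be non-empty; $\mathcal{S}$ must be flasque beneath the cover — equivalently, in the bundle picture, every edge over a context must extend to at least one edge over each adjacent context — so that $\mathcal{S}$ is a genuine no-signalling model; and the four supports must be globally invariant under a large subgroup of the scenario's automorphisms, ideally one acting transitively on the set of (context, local section) pairs. This last point makes the symmetry side of the argument go through and, via the transitive action, collapses the cohomological verification below to a single case.

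The substance lies in two opposing computations. To establish $\mathrm{SC}(\mathcal{S})$ I must check that $\mathcal{S}(X)=\emptyset$, i.e. that no choice of one section per context is compatible; on the bundle diagram this is the finite statement that no closed path visits every fibre exactly once. On the cohomological side I must show that for some context $C_0$ and section $s_0\in\mathcal{S}(C_0)$ — and hence, by the symmetry above, for all of them — the obstruction $\gamma_{C_0}(s_0)$ vanishes, so that $\mathcal{S}$ is not even cohomologically logically contextual. By Proposition~\ref{prop: main} this amounts to producing a compatible family $\{r_C\in\mathcal{F}(C)\}_{C\in\mathcal{M}}$ in $\mathcal{F}=F_{\mathbb{Z}}\mathcal{S}$ with $r_{C_0}=s_0$: concretely I would keep honest single sections over the contexts whose neighbours permit it and, over the context with the richest support, take a formal combination $[t_1]-[t_2]+[t_3]$ of sections whose restrictions to the overlaps add up correctly — the integer coefficients absorbing the ``defect'' of $s_0$, exactly as in the Hardy diagram of Figure~\ref{fig: Hardy2} but now forced to close up over the \emph{whole} cover. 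Together with the strong contextuality just established, this shows that the implication $\mathrm{CSC}\Rightarrow\mathrm{SC}$ of Theorem~\ref{thm: main} cannot be reversed, even under the strongest symmetry and connectedness assumptions, contradicting the conjecture.

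The main obstacle is the tension between these two requirements. Strong contextuality pushes the supports to be small and mutually incompatible, and that rigidity tends to annihilate $\mathbb{Z}$-linear loops as well; the cancelling family, by contrast, needs at least one context whose support is rich enough to host three sections that sum correctly on all overlaps. Threading this needle is precisely what the extra outcomes of a $(2,2,4)$ scenario make possible, and finding supports that are simultaneously strongly contextual, no-signalling, symmetric and carry a vanishing obstruction is the crux; once such an $\mathcal{S}$ is written down, the remaining verifications are a bounded, essentially mechanical computation.
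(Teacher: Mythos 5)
Your strategy is exactly the one the paper follows -- refute the conjecture by exhibiting a strongly contextual model on a $(2,2,4)$ Bell-type scenario all of whose sections nevertheless lie in compatible families of $\mathcal{F}=F_{\mathbb{Z}}\mathcal{S}$, with cancelling combinations of the form $[t_1]-[t_2]+[t_3]$ over a context with a rich support, just as in the Hardy false positive -- but the proposal stops precisely where the mathematical content begins. What you call ``the crux'' (supports that are simultaneously non-empty, flasque beneath the cover, strongly contextual, and yet admit a $\mathbb{Z}$-linear closing loop through every section) is the entire theorem: nothing in your argument shows that this tension can in fact be resolved, and a priori it could not be (for two outcomes on the $(2,2)$ cover it cannot, which is why no such example was known). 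The paper's contribution is the explicit witness: the possibility table in Appendix~\ref{apx: model} (supports of sizes $4,8,8,4$ over the four contexts), the finite check that $\mathcal{S}(X)=\emptyset$, and, for each possible section, an explicit compatible family in $\mathcal{F}$ -- e.g.\ for $s_0=(a_1,b_1)\mapsto(0,0)$ the family containing $(a_2,b_1)\mapsto(0,0)$, $(a_2,b_2)\mapsto(0,1)$ and a three-term signed combination over the remaining context. Until you produce such a table and run both verifications, you have a plausible search heuristic, not a disproof.

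A secondary point: the conjecture's symmetry and connectedness hypotheses concern the \emph{cover} $\mathcal{M}$, not the empirical model, so your extra requirement that the supports be invariant under an automorphism group acting transitively on (context, section) pairs is not needed and is in fact not satisfied by the paper's counterexample (its supports over different contexts have different cardinalities, so no such transitive action exists). Insisting on it would shrink the search space and could well exclude every working example; the price you pay without it is only that the vanishing of $\gamma_{C}(s)$ must be checked section by section rather than once, which, as you note, is a bounded mechanical computation.
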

In Figure \ref{fig: counterexample}, we introduce a counterexample to this conjecture. 
\begin{figure}[htbp]
\centering
\includegraphics[scale=0.63]{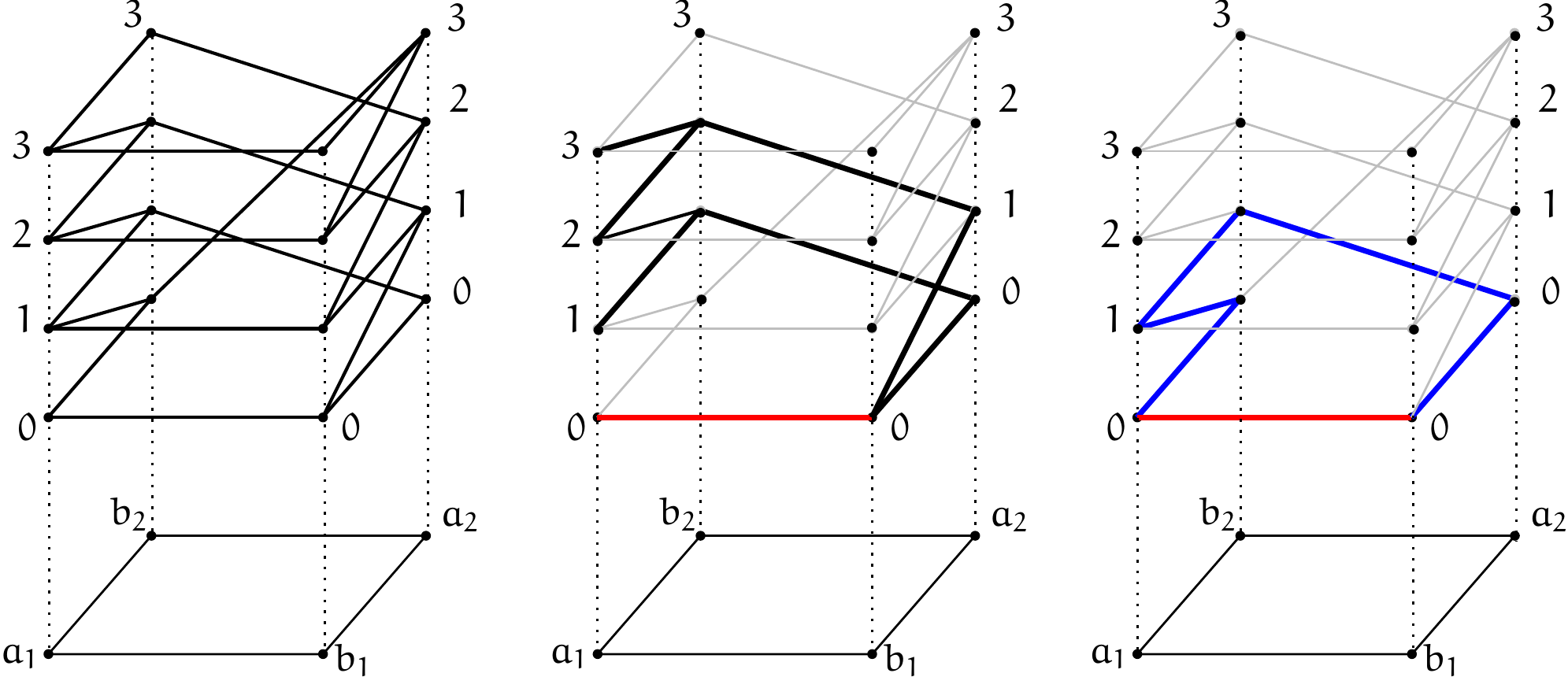}
\caption{A $\mbox{SC}\wedge\neg \mbox{CLC}$ model over a $(2,2,4)$ scenario}\label{fig: counterexample}
\end{figure}

The bundle diagram on the left-hand side represents an empirical model $\mathcal{S}$ (the explicit definition can be found in Appendix \ref{apx: model}) on a $(2,2,4)$ Bell-type scenario. Note that this measurement scenario is extremely simple and verifies any reasonably strong form of symmetry and connectedness.
By carefully analysing the picture, one verifies that none of the sections can be extended to a compatible family of $\mathcal{S}$ (i.e. a closed path containing one and only one section per context), but each one of them is contained in a compatible family of $\mathcal{F}:=F_\mathbb{Z}\mathcal{S}$, namely a closed path similar to the one generating the false positive for the Hardy model (Figure \ref{fig: Hardy2}). As an example, we show this feature by considering the section $s_0:=(a_1,b_1)\mapsto (0,0)$: from the central diagram of Figure \ref{fig: counterexample} it appears clear that this section is non-extendable to a compatible family of $\mathcal{S}$, while the diagram on the right-hand side shows that $s_0$ is part of a compatible family for $\mathcal{F}$, explicitly defined as
\[
\{s_0, (a_2, b_1)\mapsto (0,0), (a_2,b_2)\mapsto (0,1), [(a_1,b_1)\mapsto(1,1)]-[(a_1,b_1)\mapsto(1,0)]+[(a_1,b_1)\mapsto(0,0)]\}.
\]
We conclude that this model is strongly contextual but not cohomologically contextual (not even cohomologically logically contextual), essentially disproving Conjecture 8.1 of \cite{Abramsky3}.
\footnote{The open-endedness of the statement of the conjecture leaves room for a small minority of special cases where cohomology is indeed a full invariant of strong contextuality. An example is given in \cite{Shane}, where it is shown that the conjecture is true for the extremely limited class of symmetric Kochen-Specker models satisfying a condition due to Daykin and H{\"a}ggkvist \cite{Daykin}.}

\section{Higher cohomology groups}\label{sec: higher cohomology groups}

The theory developed so far involves only the first \v{C}ech cohomology group, which contains the obstructions. The existence of badly behaved false positives like the one presented in the previous section motivates a deeper inspection of the higher cohomology groups in search of information on how such extreme cases arise. We will introduce here a generalisation of cohomology obstructions to higher-dimensional cohomology groups. 

Let $\mathcal{F}$ be an abelian presheaf representing an empirical model $\mathcal{S}$ on a scenario $\langle X, \mathcal{M}, O\rangle$ (e.g. $\mathcal{F}:=F_\mathbb{Z}\mathcal{S}$). Let $q\ge 0$ be an integer and fix a context $C_0\in\mathcal{M}$. To each section $s_0\in\mathcal{F}(C_0)$ we associate a $q$-relative cochain $c_{s_0}^q\in C^q(\mathcal{M}, \mathcal{F}\mid_{C_0})$ defined by
\[
c_{s_0}^q(\omega):=s_0\mid_{C_0\cap|\omega|},~~ \forall \omega\in\mathcal{N}(\mathcal{M})^q.
\]
This assignment determines a homomorphism $\psi^q:\mathcal{F}(C_0)\rightarrow C^q(\mathcal{M}, \mathcal{F}\mid_{C_0})$ which generalises the isomorphism \eqref{equ: homomorphism}. Although $\psi^q$ is not an isomorphism in general, it is always injective, which means that different sections in $\mathcal{F}(C_0)$ are mapped to distinct elements of $C^q(\mathcal{M}, \mathcal{F}\mid_{C_0})$.

\begin{lemma}\label{lem: useful2}
For each $q\ge 0$, the homomorphism $\psi^q$ is injective.
\end{lemma}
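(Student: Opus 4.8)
The plan is to unwind the definition of $\psi^q$ and show that its kernel is trivial. Fix $q \ge 0$ and a context $C_0 \in \mathcal{M}$. Suppose $s_0 \in \mathcal{F}(C_0)$ satisfies $\psi^q(s_0) = 0$ in $C^q(\mathcal{M}, \mathcal{F}\mid_{C_0}) = \bigoplus_{\omega \in \mathcal{N}(\mathcal{M})^q} \mathcal{F}\mid_{C_0}(|\omega|) = \bigoplus_{\omega} \mathcal{F}(C_0 \cap |\omega|)$. By definition, this means $c_{s_0}^q(\omega) = s_0\mid_{C_0 \cap |\omega|} = 0$ for \emph{every} $q$-simplex $\omega \in \mathcal{N}(\mathcal{M})^q$. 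The goal is to deduce $s_0 = 0 \in \mathcal{F}(C_0)$.

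The key observation is that among the $q$-simplices there is the ``constant'' (degenerate) simplex $\omega_0 := (C_0, C_0, \dots, C_0)$, which is a legitimate element of $\mathcal{N}(\mathcal{M})^q$ since $|\omega_0| = C_0 \cap \cdots \cap C_0 = C_0 \neq \emptyset$. For this simplex, $C_0 \cap |\omega_0| = C_0 \cap C_0 = C_0$, so the vanishing condition specialises to $s_0\mid_{C_0} = \rho^{C_0}_{C_0}(s_0) = \mathrm{id}(s_0) = s_0 = 0$. Hence $s_0 = 0$, which shows $\ker \psi^q = 0$, i.e. $\psi^q$ is injective. I would also remark that $\psi^q$ is indeed a group homomorphism: each component $s_0 \mapsto s_0\mid_{C_0 \cap |\omega|}$ is the restriction homomorphism $\mathcal{F}(C_0 \subseteq \text{--})$ followed by the canonical inclusion into the direct sum, and a tuple of homomorphisms into a direct sum is a homomorphism; but strictly this is not even needed for injectivity once we exhibit the component that recovers $s_0$.

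I expect the only genuine subtlety — and the thing worth stating carefully — is whether degenerate tuples $(C_0,\dots,C_0)$ are admitted as simplices of the nerve. The excerpt defines $\mathcal{N}(\mathcal{M})^q$ as the set of \emph{tuples} $\sigma = (C_0,\dots,C_q)$ of elements of $\mathcal{M}$ with $|\sigma| \neq \emptyset$, with no non-degeneracy requirement, and the coboundary formulas are written in the standard (ordered, not alternating/normalised) convention, so degenerate simplices are part of the cochain complex. Thus the argument above is valid as written. If one instead wanted to avoid relying on degenerate simplices, the alternative is: pick any simplex $\omega$ with $C_0$ appearing among its vertices and with $|\omega| = C_0$ — for instance $\omega = (C, C_0, \dots, C_0)$ fails unless $C \supseteq C_0$, so the clean choice really is the fully degenerate one; hence I would simply use $\omega_0 = (C_0,\dots,C_0)$ and note it lies in $\mathcal{N}(\mathcal{M})^q$. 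No further case analysis is required, and the proof is complete in a couple of lines.
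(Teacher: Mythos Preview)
Your proof is correct and follows exactly the same approach as the paper: evaluate the cochain $c_{s_0}^q$ at the degenerate simplex $(C_0,\dots,C_0)$ to recover $s_0$ itself, concluding $\ker\psi^q=0$. Your additional remarks on degeneracy and the homomorphism property are accurate but not needed beyond what the paper's one-line argument already establishes.
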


\begin{proof}
Let $s_0\in\ker(\psi^q)$. Then $c_{s_0}^q=0$, thus in particular $0=c_{s_0}^q(\underbrace{C_0,\dots, C_0}_{q+1\text{~times}})=s_0.$ Therefore $\ker(\psi^q)=0$ and the homomorphism is injective. 
\end{proof}

It turns out that parity in dimension plays an important role:

\begin{lemma}
Let $q\ge 0$. The image of $\psi^q$ is contained in $Z^{q}(\mathcal{M}, \mathcal{F}\mid_{C_0})$ if and only if $q$ is even. 
\end{lemma}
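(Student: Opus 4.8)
The plan is to compute the coboundary $\delta^q(c_{s_0}^q)$ explicitly and determine for which $q$ it vanishes for every $s_0$. By definition, for a simplex $\sigma=(C_0',\dots,C_{q+1}')\in\mathcal{N}(\mathcal{M})^{q+1}$ we have
\[
\delta^q(c_{s_0}^q)(\sigma)=\sum_{j=0}^{q+1}(-1)^j\rho_{|\sigma|}^{|\partial_j\sigma|}\bigl(c_{s_0}^q(\partial_j\sigma)\bigr).
\]
Since $c_{s_0}^q(\partial_j\sigma)=s_0\mid_{C_0\cap|\partial_j\sigma|}$ and applying $\rho_{|\sigma|}^{|\partial_j\sigma|}$ further restricts to $C_0\cap|\sigma|$ (note $|\sigma|\subseteq|\partial_j\sigma|$, and working inside $\mathcal{F}\mid_{C_0}$ the value is $s_0\mid_{C_0\cap|\sigma|}$), each of the $q+2$ summands equals the \emph{same} element $s_0\mid_{C_0\cap|\sigma|}\in\mathcal{F}\mid_{C_0}(|\sigma|)=\mathcal{F}(C_0\cap|\sigma|)$. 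Hence
\[
\delta^q(c_{s_0}^q)(\sigma)=\Bigl(\sum_{j=0}^{q+1}(-1)^j\Bigr)\,s_0\mid_{C_0\cap|\sigma|}.
\]
The alternating sum $\sum_{j=0}^{q+1}(-1)^j$ has $q+2$ terms: it equals $0$ when $q+2$ is even, i.e. when $q$ is even, and equals $1$ when $q+2$ is odd, i.e. when $q$ is odd.

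This immediately gives the ``if'' direction: when $q$ is even, $\delta^q(c_{s_0}^q)(\sigma)=0$ for all $\sigma$, so $c_{s_0}^q\in Z^q(\mathcal{M},\mathcal{F}\mid_{C_0})$, and since $s_0$ was arbitrary the image of $\psi^q$ lies in the cocycles. For the ``only if'' direction, suppose $q$ is odd; then $\delta^q(c_{s_0}^q)(\sigma)=s_0\mid_{C_0\cap|\sigma|}$, and I must exhibit some $s_0$ and some $\sigma$ making this nonzero. Taking $\sigma=(C_0,\dots,C_0)$ (which is a legitimate $(q+1)$-simplex since $|\sigma|=C_0\neq\emptyset$), we get $\delta^q(c_{s_0}^q)(\sigma)=s_0\mid_{C_0}=s_0$, which is nonzero as soon as we pick any $s_0\neq 0$ in $\mathcal{F}(C_0)$ — such an element exists because $\mathcal{F}$ represents a nonempty empirical model (indeed $\mathcal{F}(C_0)$ contains $i_{C_0}(s_0')\neq 0$ for any $s_0'\in\mathcal{S}(C_0)$, which is nonempty by condition \ref{cond1}). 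Therefore the image of $\psi^q$ is not contained in $Z^q$, completing the contrapositive.

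The only genuinely delicate point is bookkeeping the restriction maps: one must check that, inside the presheaf $\mathcal{F}\mid_{C_0}$, the composite of the restriction $\rho_{|\sigma|}^{|\partial_j\sigma|}$ with evaluation of $c_{s_0}^q$ at $\partial_j\sigma$ really is independent of $j$, which comes down to the transitivity of the restriction maps of $\mathcal{F}$ — $\rho^{C_0\cap|\partial_j\sigma|}_{C_0\cap|\sigma|}\circ(s_0\mapsto s_0\mid_{C_0\cap|\partial_j\sigma|}) = (s_0\mapsto s_0\mid_{C_0\cap|\sigma|})$ — together with the fact that $|\sigma|\subseteq|\partial_j\sigma|$ for every $j$. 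I expect this to be the main (though still routine) obstacle; once it is settled, the parity of $q+2$ does all the remaining work.
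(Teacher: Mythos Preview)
Your proof is correct and follows essentially the same route as the paper's: compute $\delta^q(c_{s_0}^q)(\sigma)$ as the alternating sum of identical terms $s_0\mid_{C_0\cap|\sigma|}$ and read off the parity condition. If anything, you are more careful than the paper on the ``only if'' direction, since you explicitly exhibit a nonzero $s_0$ and the simplex $(C_0,\dots,C_0)$ witnessing failure when $q$ is odd, whereas the paper leaves this implicit.
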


\begin{proof}
Let $s_0\in\mathcal{F}(C_0)$. For any $\omega\in \mathcal{N}(\mathcal{M})^{q+1}$ we have

\[
\delta^{q}\left(c_{s_0}^{q}\right)(\omega)=\sum_{k=0}^{q+1}(-1)^k\rho_{|\omega|}^{|\partial_k \omega|}\left(c_{s_0}^{q}(\partial_k\omega)\right)=\sum_{k=0}^{q+1}(-1)^k\rho_{|\omega|}^{|\partial_k \omega|}\left(s_0\mid_{C_0\cap|\partial_k\omega|}\right)=\sum_{k=0}^{q+1}(-1)^ks_0\mid_{C_0\cap|\omega|}.
\]
The last sum is an alternating sum. Therefore, $\delta^q\left(c_{s_0}^{q}\right)(\omega)=0$ if and only if $q$ is even. 
\end{proof}

%
%
%

Given a $q\ge 0$, we can generalise the construction of the connecting homomorphism to the order $2q$. For each $\sigma\in \mathcal{N}(\mathcal{M})^{2q}$, the exact sequence \eqref{equ: exact sequence} yelds an exact sequence on objects
\begin{equation*}
0\xrightarrow{~~0~~}\mathcal{F}_{\tilde{{C_0}}}(|\sigma|):=ker(p^{C_0}_{|\sigma|})\xrightarrow{~~~~~}\mathcal{F}(|\sigma|)\xrightarrow{~~p^{C_0}_{|\sigma|}~~}\mathcal{F}\mid_{C_0}(|\sigma|)\longrightarrow 0,
\end{equation*}
where exactness on the right is given by flaccidity beneath the cover ($p^{C_0}_{|\sigma|}$ is surjective for all $\sigma$ since $|\sigma|\cap C_0\subseteq C_0$). 
We can sum these morphisms for every $\sigma\in\mathcal{N}(\mathcal{M})^{2q}$ and ``lift'' exactness to the chain level:
\begin{equation}\label{equ: SES2}
0\xrightarrow{~~0~~}C^{2q}(\mathcal{M}, \mathcal{F}_{\tilde{C}_0})\xrightarrow{~~~~~}C^{2q}(\mathcal{M}, \mathcal{F})\xrightarrow{~~\bigoplus_{\sigma}p^{C_0}_{|\sigma|}~~}C^{2q}(\mathcal{M}, \mathcal{F}\mid_{C_0})\longrightarrow 0.
\end{equation}
Then, we take the correstriction $\tilde{\delta}^{2q}$ of the $2q$-th coboundary maps to $Z^{2q+1}$ and obtain

\begin{center}
\begin{tikzpicture}[auto]
\matrix (m) [matrix of math nodes, row sep=2em, column sep=2em, text height=2ex, text depth=0.25ex]
{
0 & C^{2q}(\mathcal{M}, \mathcal{F}_{\tilde{C}_0}) & C^{2q}(\mathcal{M}, \mathcal{F}) & C^{2q}(\mathcal{M}, \mathcal{F}\mid_{C_0}) & 0 \\
0 & Z^{2q+1}(\mathcal{M}, \mathcal{F}_{\tilde{C}_0}) & Z^{2q+1}(\mathcal{M}, \mathcal{F}) & Z^{2q+1}(\mathcal{M}, \mathcal{F}\mid_{C_0}) & \\
};

\path[->, font=\scriptsize]
(m-1-1) edge node[auto] {} (m-1-2)
(m-1-2) edge node[auto] {} (m-1-3)
(m-1-3) edge node[auto] {} (m-1-4)
(m-1-4) edge node[auto] {} (m-1-5)
(m-2-1) edge node[auto] {} (m-2-2)
(m-2-2) edge node[auto] {} (m-2-3)
(m-2-3) edge node[auto] {} (m-2-4)
(m-1-2) edge node[auto] {$\tilde{\delta}^{2q}$} (m-2-2)
(m-1-3) edge node[auto] {$\tilde{\delta}^{2q}$} (m-2-3)
(m-1-4) edge node[auto] {$\tilde{\delta}^{2q}$} (m-2-4);

\end{tikzpicture}
\end{center}
Finally, we apply the snake lemma to this diagram and obtain the $q$-th connecting homomorphism $\tilde{\gamma}^q_{C_0}$.\\

\label{equ: snake2}
  \raisebox{-0.5\height}{\includegraphics{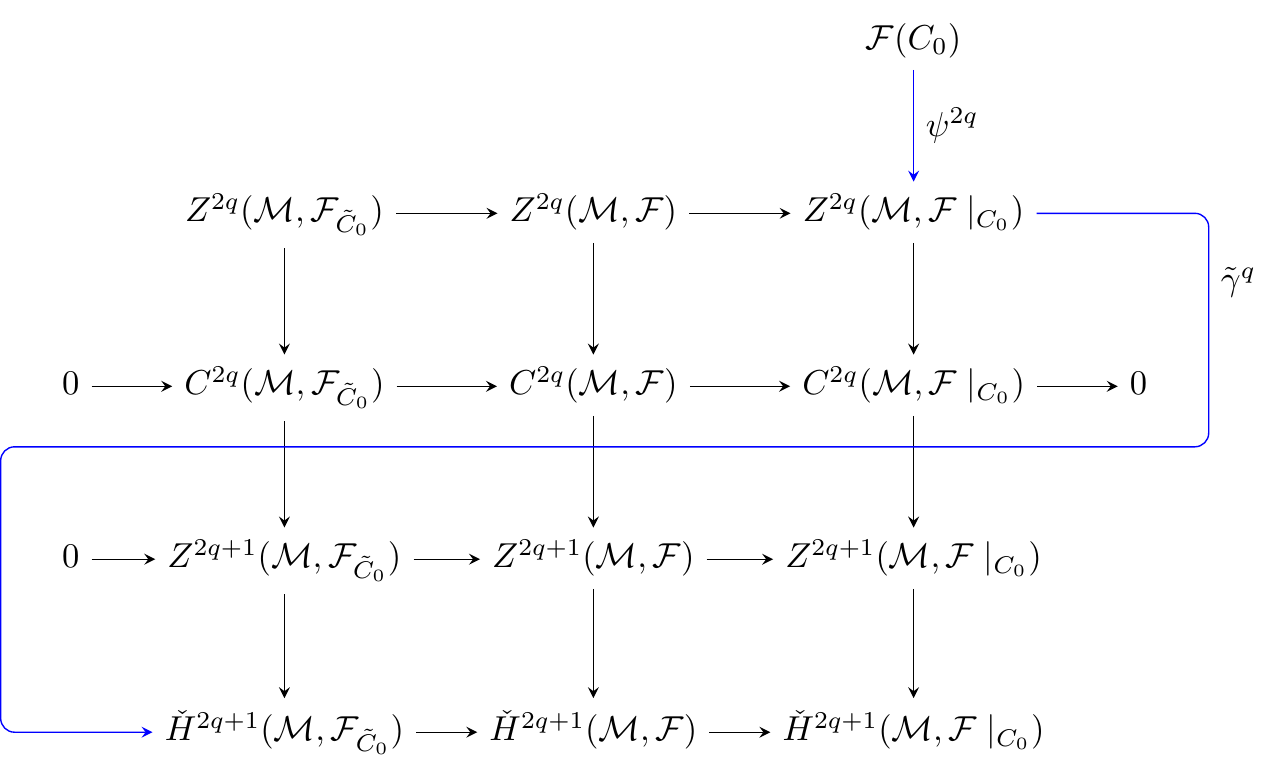}}

\begin{defn}
Let $s_0\in\mathcal{F}(C_0)$. We define the $q$-\textbf{th cohomological obstruction} of $s_0$ as the element 
\[
\gamma^q_{C_0}(s_0):=\tilde{\gamma}^q_{C_0}(\psi^{2q}(s_0))\in \check{H}^{2q+1}(\mathcal{M}, \mathcal{F}).\footnote{Notice that if $q=0$ this definition coincides with the one of cohomological obstruction given before ($\gamma_{C_0}^0=\gamma_{C_0}$).}
\]
The empirical model $\mathcal{S}$ underlying $\mathcal{F}$ is defined to be
\begin{itemize}
 \item \textbf{cohomologically logically} $q$-\textbf{contextual at a section} $s_0$, or $\mbox{CLC}^q(\mathcal{S}, s_0)$, if $\gamma^q_{C_0}(s_0)\neq 0$. We say that $\mathcal{S}$ is \textbf{cohomologically logically} $q$-\textbf{contextual} if $\mbox{CLC}^q(\mathcal{S},s)$ for  some section $s$.

\item \textbf{cohomologically strongly} $q$-\textbf{contextual}, or $\mbox{\mbox{CSC}}^q(\mathcal{S})$, if $\mbox{CLC}^q(\mathcal{S}, s)$ for all $s$.\footnote{Clearly, these definitions depend on the abelian presheaf $\mathcal{F}$ representing $\mathcal{F}$. Typically, $\mathcal{F}:=F_\mathbb{Z}\mathcal{S}$.}
\end{itemize}
\end{defn}
Notice that, due to parity arguments needed to achieve this definition, the cohomological obstruction is generalisable only to odd-dimensional cohomology groups. 

In the case $q=0$, Proposition \ref{prop: main} tells us that the vanishing of the cohomological obstruction is equivalent to the existence of a compatible family in $\mathcal{F}$ containing $s_0$. The analogous result for higher obstructions is the following:

\begin{lemma}\label{lem: general}
Given a $q\ge 0$, a context $C_0\in\mathcal{M}$ and a local section $s_0\in\mathcal{F}(C_0)$, $\gamma_{C_0}^q(s_0)=0$ if and only if there exists a family $s\in Z^{2q}(\mathcal{M},\mathcal{F})$ such that 
\begin{equation}\label{equ: target}
p^{C_0}_{|\sigma|}(s(\sigma))=c^{2q}_{s_0}(\sigma)=s_0\mid_{C_0\cap |\sigma|}~~\forall \sigma\in\mathcal{N}(\mathcal{M})^{2q}.
\end{equation}
\end{lemma}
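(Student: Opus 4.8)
The plan is to unpack the definition $\gamma^q_{C_0}(s_0):=\tilde{\gamma}^q_{C_0}(\psi^{2q}(s_0))$ and run the standard diagram chase that computes the snake connecting homomorphism $\tilde{\gamma}^q_{C_0}$ on the explicit cochain $\psi^{2q}(s_0)=c^{2q}_{s_0}$. The first point is that, since the dimension $2q$ is even, the preceding parity lemma gives $c^{2q}_{s_0}\in Z^{2q}(\mathcal{M},\mathcal{F}\mid_{C_0})$, so $c^{2q}_{s_0}$ lies in the kernel of the right-hand vertical map in the diagram to which the snake lemma is applied and $\tilde{\gamma}^q_{C_0}$ is defined on it. I would then recall how that connecting map is computed: by surjectivity of $\bigoplus_{\sigma}p^{C_0}_{|\sigma|}$ — which holds by flaccidity beneath the cover, exactly as in \eqref{equ: SES2} — choose any lift $y\in C^{2q}(\mathcal{M},\mathcal{F})$ of $c^{2q}_{s_0}$; naturality of $\delta^{2q}$ with respect to the presheaf morphism $p^{C_0}$, together with $\delta^{2q}c^{2q}_{s_0}=0$, shows that $\delta^{2q}y$ lies in the kernel of $\bigoplus_{\sigma}p^{C_0}_{|\sigma|}$ in dimension $2q+1$ (the level-$(2q+1)$ analogue of \eqref{equ: SES2}), i.e. $\delta^{2q}y\in Z^{2q+1}(\mathcal{M},\mathcal{F}_{\tilde{C}_0})$, and $\tilde{\gamma}^q_{C_0}(c^{2q}_{s_0})$ is by definition its class $[\delta^{2q}y]$ in the target group $\check{H}^{2q+1}(\mathcal{M},\mathcal{F}_{\tilde{C}_0})$.

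With this description in place, the ($\Leftarrow$) direction is immediate: if $s\in Z^{2q}(\mathcal{M},\mathcal{F})$ satisfies \eqref{equ: target}, then, reading \eqref{equ: target} coordinatewise, $s$ is itself an admissible lift $y$ of $c^{2q}_{s_0}$ under $\bigoplus_{\sigma}p^{C_0}_{|\sigma|}$, and since $s$ is a cocycle we have $\delta^{2q}s=0$, whence $\gamma^q_{C_0}(s_0)=[\delta^{2q}s]=0$.

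For the ($\Rightarrow$) direction I would start from an arbitrary lift $y$ as above and assume $\gamma^q_{C_0}(s_0)=[\delta^{2q}y]=0$ in $\check{H}^{2q+1}(\mathcal{M},\mathcal{F}_{\tilde{C}_0})$. By definition of that cohomology group this means $\delta^{2q}y$ is a coboundary \emph{inside} the complex $C^{\bullet}(\mathcal{M},\mathcal{F}_{\tilde{C}_0})$, so there is $z\in C^{2q}(\mathcal{M},\mathcal{F}_{\tilde{C}_0})$ with $\delta^{2q}z=\delta^{2q}y$. Set $s:=y-z$. Then $\delta^{2q}s=0$, so $s\in Z^{2q}(\mathcal{M},\mathcal{F})$; and because $z$ lies in $C^{2q}(\mathcal{M},\mathcal{F}_{\tilde{C}_0})$, which is exactly the kernel of $\bigoplus_{\sigma}p^{C_0}_{|\sigma|}$, we get $\bigoplus_{\sigma}p^{C_0}_{|\sigma|}(s)=\bigoplus_{\sigma}p^{C_0}_{|\sigma|}(y)=c^{2q}_{s_0}$, i.e. $p^{C_0}_{|\sigma|}(s(\sigma))=s_0\mid_{C_0\cap|\sigma|}$ for every $\sigma\in\mathcal{N}(\mathcal{M})^{2q}$, which is precisely \eqref{equ: target}.

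I expect no genuine obstacle here: the whole thing is a routine snake-lemma chase, and it is the verbatim generalisation of the $q=0$ case (Proposition \ref{prop: main}). The two points that do need care are (i) checking that $\psi^{2q}(s_0)$ is a cocycle, so that $\tilde{\gamma}^q_{C_0}$ may be evaluated on it — this is exactly where the parity hypothesis (even dimension $2q$) enters, via the preceding lemma, and it is the only new ingredient compared with $q=0$ — and (ii) keeping track of which group is the codomain of $\tilde{\gamma}^q_{C_0}$, namely the cokernel $\check{H}^{2q+1}(\mathcal{M},\mathcal{F}_{\tilde{C}_0})$ of the left-hand correstricted coboundary $\tilde{\delta}^{2q}$, so that ``the obstruction vanishes'' is correctly read as ``$\delta^{2q}y$ is a coboundary in $C^{\bullet}(\mathcal{M},\mathcal{F}_{\tilde{C}_0})$''.
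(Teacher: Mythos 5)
Your proof is correct and follows essentially the same route as the paper: the paper simply invokes exactness of the snake-lemma sequence at $Z^{2q}(\mathcal{M},\mathcal{F}\mid_{C_0})$ (kernel of $\tilde{\gamma}^q_{C_0}$ equals the image of $Z^{2q}(\mathcal{M},\mathcal{F})$ under $\bigoplus_\sigma p^{C_0}_{|\sigma|}$), while you unroll the same fact as an explicit diagram chase. Your added care about $c^{2q}_{s_0}$ being a cocycle (the parity lemma) and about the codomain being $\check{H}^{2q+1}(\mathcal{M},\mathcal{F}_{\tilde{C}_0})$ is accurate and consistent with the paper's construction.
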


\begin{proof}
$\gamma^q(s_0)=0\Leftrightarrow\tilde{\gamma}^q(c^{2q}_{s_0})=0\Leftrightarrow c^{2q}_{s_0}\in \ker(\tilde{\gamma}^q)$. Since $\tilde{\gamma}^q$ is defined using the snake lemma, it is part of an exact sequence. Therefore, $c^{2q}_{s_0}\in \ker(\tilde{\gamma}^q)$ if and only if there exists a family $s\in Z^{2q}(\mathcal{M},\mathcal{F})$ such that \eqref{equ: target} is verified.

\end{proof}

\subsection{A hierarchy of cohomological obstructions}

Remarkably, higher cohomology obstructions are organised in a precise hierarchy of implications. 
In the following proposition we show that, if an obstruction vanishes at order $q\ge 0$, it must vanish at any higher order $q'\ge q$ (the proof is given in Appendix \ref{apx: proofs}).

\begin{proposition}\label{thm: hierarchy}
Let $\mathcal{F}$ be an abelian presheaf representing an empirical model $\mathcal{S}$ on a scenario $\langle X, \mathcal{M}, O\rangle$. Let $s_0\in\mathcal{F}(C_0)$. Then $\mbox{CLC}^{q+1}(\mathcal{S}, s_0)\Rightarrow \mbox{CLC}^{q}(\mathcal{S}, s_0)$ for all $q\ge 0$.
\end{proposition}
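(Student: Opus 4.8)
The plan is to prove the contrapositive of the statement: if $\gamma^{q}_{C_0}(s_0)=0$ then $\gamma^{q+1}_{C_0}(s_0)=0$. The key reduction is Lemma~\ref{lem: general}, which turns both vanishing statements into purely combinatorial conditions on \v{C}ech cochains: $\gamma^{q}_{C_0}(s_0)=0$ is equivalent to the existence of a cocycle $s\in Z^{2q}(\mathcal{M},\mathcal{F})$ with $p^{C_0}_{|\sigma|}(s(\sigma))=s_0\mid_{C_0\cap|\sigma|}$ for all $\sigma\in\mathcal{N}(\mathcal{M})^{2q}$, and $\gamma^{q+1}_{C_0}(s_0)=0$ is equivalent to the existence of a cocycle $s'\in Z^{2q+2}(\mathcal{M},\mathcal{F})$ with the analogous property at level $2q+2$. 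So the whole problem reduces to manufacturing such an $s'$ out of $s$ by ``going up two dimensions''.

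The construction I would use pulls $s$ back along the simplicial map $\mathcal{N}(\mathcal{M})^{2q+2}\to\mathcal{N}(\mathcal{M})^{2q}$ that forgets the last two vertices:
\[
s'(C_0,\dots,C_{2q+2}):=\rho^{|(C_0,\dots,C_{2q})|}_{|(C_0,\dots,C_{2q+2})|}( s(C_0,\dots,C_{2q}) ).
\]
This makes sense because $|(C_0,\dots,C_{2q})|\supseteq|(C_0,\dots,C_{2q+2})|\neq\emptyset$, so the front face $(C_0,\dots,C_{2q})$ is a genuine $2q$-simplex. The required relative-cochain identity for $s'$ is then immediate: since $p^{C_0}$ is a morphism of presheaves it commutes with restrictions, so applying $p^{C_0}_{|\sigma|}$ to $s'(\sigma)$ first pushes $p^{C_0}$ through the restriction, then invokes the hypothesis on $s$ at $(C_0,\dots,C_{2q})$, and finally re-restricts $s_0$ down to $C_0\cap|\sigma|$ --- yielding exactly $c^{2q+2}_{s_0}(\sigma)$.

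The substantive step is that $s'$ is a cocycle. Let $\partial^{\sharp}_{j}$ denote the evident degree-raising operator on cochains, $\omega\mapsto(\sigma\mapsto\rho(\omega(\partial_{j}\sigma)))$, so that $s'=\partial^{\sharp}_{2q+2}\partial^{\sharp}_{2q+1}s$ and $\delta^{n}=\sum_{j}(-1)^{j}\partial^{\sharp}_{j}$. From the cosimplicial identities one gets, for the top index $m$, the relation $\delta^{m}\partial^{\sharp}_{m}=\partial^{\sharp}_{m+1}\delta^{m-1}-(-1)^{m}\partial^{\sharp}_{m}\partial^{\sharp}_{m}$. Applying this to $\delta^{2q+2}s'$ first with $m=2q+2$ and then, inside, with $m=2q+1$, and using $\delta^{2q}s=0$, the whole expression reduces to $\partial^{\sharp}_{2q+3}\partial^{\sharp}_{2q+1}\partial^{\sharp}_{2q+1}s-\partial^{\sharp}_{2q+2}\partial^{\sharp}_{2q+2}\partial^{\sharp}_{2q+1}s$. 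Both triple composites are the pullback of $s$ along one and the same simplicial operator, namely ``forget the last three vertices'', $(C_0,\dots,C_{2q+3})\mapsto(C_0,\dots,C_{2q})$; hence they cancel and $\delta^{2q+2}s'=0$. (Equivalently, the same computation shows that $s\mapsto s'$ is a cochain map in these degrees, so it carries cocycles to cocycles automatically.) Plugging $s'$ into Lemma~\ref{lem: general} yields $\gamma^{q+1}_{C_0}(s_0)=0$, which is the contrapositive of the proposition.

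I expect the only delicate bookkeeping to be: (i) getting the signs in the cosimplicial identity right, so that the two surviving triple composites occur with \emph{opposite} signs --- this is exactly where the contiguity of the degrees $2q+1$ and $2q+2$ enters --- and (ii) the short index chase verifying that these two triple composites delete the same three vertices. The reduction via Lemma~\ref{lem: general}, the well-definedness of $s'$, and the presheaf-naturality check are all routine. One could instead avoid the abstract manipulation entirely, expanding $\delta^{2q+2}s'(\tau)=\sum_{k}(-1)^{k}\rho(s'(\partial_{k}\tau))$ termwise and regrouping with $\delta^{2q}s=0$, at the price of heavier indexing.
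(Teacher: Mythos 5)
Your proposal is correct and follows essentially the same route as the paper: the contrapositive is reduced via Lemma~\ref{lem: general}, and your $s'$ is exactly the paper's cochain $f(s)(\tau)=s(\partial_{2q+1}\partial_{2q+2}\tau)\mid_{|\tau|}$, with the same check of the relative condition $p^{C_0}_{|\sigma|}(s'(\sigma))=c^{2q+2}_{s_0}(\sigma)$. The only difference is presentational: you verify $\delta^{2q+2}s'=0$ through the coface identities (your identity $\delta^{m}\partial^{\sharp}_{m}=\partial^{\sharp}_{m+1}\delta^{m-1}-(-1)^{m}\partial^{\sharp}_{m}\partial^{\sharp}_{m}$ is correct, and the two surviving triple composites do both forget the last three vertices), whereas the paper carries out the equivalent termwise index computation.
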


This result suggests the existence of an infinite number of ``levels'' of contextuality organised in the following hierarchy of logical implications:

\begin{center}
\begin{tikzpicture}[auto]
\matrix (m) [matrix of math nodes, row sep=1.5em, column sep=1.5em, text height=1.5ex, text depth=0.25ex]
{
\mbox{CSC}(\mathcal{S}) & \mbox{CSC}^1(\mathcal{S}) & \dots & \mbox{CSC}^q(\mathcal{S}) & \mbox{CSC}^{q+1}(\mathcal{S})  & \dots\\
\mbox{CLC}(\mathcal{S}) & \mbox{CLC}^1(\mathcal{S}) & \dots & \mbox{CLC}^q(\mathcal{S}) & \mbox{CLC}^{q+1}(\mathcal{S})  & \dots\\
};

\path[-stealth, font=\scriptsize]

(m-1-2) edge[double] node[swap] {\ref{thm: hierarchy}} (m-1-1)
(m-1-3) edge[double] node[auto] {} (m-1-2)
(m-1-4) edge[double] node[auto] {} (m-1-3)
(m-1-5) edge[double] node[swap] {\ref{thm: hierarchy}} (m-1-4)
(m-1-6) edge[double] node[auto] {} (m-1-5)
(m-2-2) edge[double] node[auto] {\ref{thm: hierarchy}} (m-2-1)
(m-2-3) edge[double] node[auto] {} (m-2-2)
(m-2-4) edge[double] node[auto] {} (m-2-3)
(m-2-5) edge[double] node[auto] {\ref{thm: hierarchy}} (m-2-4)
(m-2-6) edge[double] node[auto] {} (m-2-5)

(m-1-1) edge[double] node[auto] {} (m-2-1)
(m-1-2) edge[double] node[auto] {} (m-2-2)
(m-1-4) edge[double] node[auto] {} (m-2-4)
(m-1-5) edge[double] node[auto] {} (m-2-5);

\end{tikzpicture}
\end{center}

However, it turns out that this refinement of the notion of cohomological contextuality cannot be applied to the study of contextuality in no-signalling empirical models (see Appendix \ref{apx: proofs} for the proof):

\begin{proposition}\label{prop: non-contextuality}
No-signalling empirical models are cohomologically $q$-non-contextual for any $q\gneq 0$.
\end{proposition}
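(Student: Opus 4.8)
\emph{Proof strategy.} The plan is to show that for every $q\ge 1$, every context $C_0\in\mathcal{M}$ and every $s_0\in\mathcal{F}(C_0)$ one has $\gamma^q_{C_0}(s_0)=0$; this says $\neg\mathrm{CLC}^q(\mathcal{S},s_0)$ for all such $s_0$, hence $\neg\mathrm{CLC}^q(\mathcal{S})$ and a fortiori $\neg\mathrm{CSC}^q(\mathcal{S})$. By Lemma~\ref{lem: general} it is enough to exhibit, for fixed $q,C_0,s_0$, a cochain $s\in Z^{2q}(\mathcal{M},\mathcal{F})$ with $p^{C_0}_{|\sigma|}(s(\sigma))=s_0\mid_{C_0\cap|\sigma|}$ for all $\sigma\in\mathcal{N}(\mathcal{M})^{2q}$, so the whole argument can stay at the level of cochains and avoid the snake-lemma bookkeeping entirely.

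The key step is the observation that $\psi^{2q}(s_0)=c^{2q}_{s_0}$ is already an \emph{explicit coboundary} in the complex $C^\bullet(\mathcal{M},\mathcal{F}\mid_{C_0})$, namely $c^{2q}_{s_0}=\delta^{2q-1}\!\bigl(c^{2q-1}_{s_0}\bigr)$. Indeed, the same alternating-sum computation that appears in the proof of the lemma on the parity of $\psi^q$, carried out in the odd degree $2q-1$, gives $\delta^{2q-1}(c^{2q-1}_{s_0})(\omega)=\sum_{k=0}^{2q}(-1)^k\, s_0\mid_{C_0\cap|\omega|}$, an alternating sum of an odd number $2q+1$ of copies of $s_0\mid_{C_0\cap|\omega|}$, which therefore equals $s_0\mid_{C_0\cap|\omega|}=c^{2q}_{s_0}(\omega)$. (In particular the argument does not apply when $q=0$, since there is no cochain group in degree $-1$; this is consistent with the ordinary obstruction $\gamma^0_{C_0}$ not being forced to vanish and indeed detecting contextuality.)

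It then remains to transport this relation from $\mathcal{F}\mid_{C_0}$ up to $\mathcal{F}$. Because $\mathcal{F}$ is flasque beneath the cover -- and this is precisely the no-signalling hypothesis -- each $p^{C_0}_{|\sigma|}:\mathcal{F}(|\sigma|)\to\mathcal{F}\mid_{C_0}(|\sigma|)=\mathcal{F}(|\sigma|\cap C_0)$ is surjective (since $|\sigma|\cap C_0\subseteq|\sigma|\subseteq C'$ for some $C'\in\mathcal{M}$), hence the induced cochain map $\bigoplus_\sigma p^{C_0}_{|\sigma|}:C^{2q-1}(\mathcal{M},\mathcal{F})\to C^{2q-1}(\mathcal{M},\mathcal{F}\mid_{C_0})$ is surjective as well. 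I would choose a preimage $\tilde t\in C^{2q-1}(\mathcal{M},\mathcal{F})$ of $c^{2q-1}_{s_0}$ and set $s:=\delta^{2q-1}(\tilde t)$. Then $s\in Z^{2q}(\mathcal{M},\mathcal{F})$ because $\delta^{2q}\delta^{2q-1}=0$, and since $p^{C_0}$ is a morphism of presheaves the induced maps commute with the coboundaries, so $\bigoplus_\sigma p^{C_0}_{|\sigma|}(s)=\delta^{2q-1}(c^{2q-1}_{s_0})=c^{2q}_{s_0}$, i.e.\ $s$ satisfies \eqref{equ: target}. Lemma~\ref{lem: general} then gives $\gamma^q_{C_0}(s_0)=0$, as required.

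I expect the only point that needs care is the surjectivity of $\bigoplus_\sigma p^{C_0}_{|\sigma|}$ in the \emph{odd} degree $2q-1$: the excerpt records the relevant short exact sequence only in even degree, so I would spell out that the flasqueness-beneath-the-cover argument applies verbatim to $(2q-1)$-simplices, and I would emphasise that this is the one and only place where the no-signalling assumption enters -- for a model with signalling, $\mathcal{F}$ need not be flasque beneath the cover, the lift $\tilde t$ may fail to exist, and higher obstructions can genuinely be non-trivial. As a consistency check one could alternatively establish only the case $q=1$ and then invoke Proposition~\ref{thm: hierarchy} to propagate the vanishing to all $q\ge 1$, but the argument above already covers every $q\ge 1$ uniformly.
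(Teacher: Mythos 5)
Your proof is correct, and it takes a somewhat different and in fact more uniform route than the paper's. The paper argues only the case $q=1$ explicitly: using no-signalling (flasqueness beneath the cover) it chooses a family $\{s_{C_i}\in\mathcal{F}(C_i)\}_{i\in I}$ with $s_{C_i}\mid_{C_i\cap C_0}=s_{C_0}\mid_{C_i\cap C_0}$, defines the $2$-cochain $z(C_i,C_j,C_k):=s_{C_j}\mid_{C_i\cap C_j\cap C_k}$, verifies by hand that $z\in Z^2(\mathcal{M},\mathcal{F})$ and that it projects to $c^2_{s_0}$, and then invokes Proposition \ref{thm: hierarchy} to propagate the vanishing to all $q>0$. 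You instead observe that $c^{2q}_{s_0}=\delta^{2q-1}\bigl(c^{2q-1}_{s_0}\bigr)$ in $C^\bullet(\mathcal{M},\mathcal{F}\mid_{C_0})$, lift $c^{2q-1}_{s_0}$ through the surjective cochain map $\bigoplus_\sigma p^{C_0}_{|\sigma|}$ (surjectivity being exactly the flasqueness/no-signalling input, valid in every degree since $|\sigma|\cap C_0\subseteq|\sigma|\subseteq C$ for some $C\in\mathcal{M}$), and take $s:=\delta^{2q-1}(\tilde t)$, which is automatically a cocycle projecting to $c^{2q}_{s_0}$ because the projection is a chain map; Lemma \ref{lem: general} then finishes each $q\ge 1$ at once. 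Note that the paper's $z$ is precisely your construction for $q=1$ with the particular lift $\tilde t(C_i,C_j):=s_{C_j}\mid_{C_i\cap C_j}$, so your argument subsumes it. What your route buys is uniformity and economy: no dependence on the hierarchy result (whose proof is itself a lengthy computation) and no explicit cocycle verification; what the paper's route buys is an explicit witness built directly from a no-signalling family and reuse of Proposition \ref{thm: hierarchy}, which it establishes anyway. Your caveats (the degree-$(2q-1)$ short exact sequence not being displayed in the text, and the breakdown at $q=0$ for lack of a degree $-1$ cochain group) are exactly the right points to flag and are easily discharged.
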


Note that, on the other hand, the implications of Proposition \ref{thm: hierarchy} can potentially be used to study the signalling structure of empirical models.\footnote{We thank Kohei Kishida for suggesting this possible application during QPL 2016.} We aim to investigate this aspect in future work.

\section{An alternative description of the first cohomology group}\label{sec: final}

Since higher cohomology groups cannot be used to infer information on how false positives arise, we devote the last section of this paper to a detailed study of the first cohomology group $\check{H}^1(\mathcal{M},\mathcal{F}_{\tilde{C}_0})$. As explained in \cite{Abramsky3}, this group is of crucial importance for the cohomological study of contextuality, as it contains all of the obstructions to the existence of global sections. Its relevance has been also previously highlighted by Penrose in his \textit{On the cohomology of impossible figures} \cite{Penrose}, which presents ``intriguing resemblances'' with our study \cite{Abramsky2}. Yet a full grasp of the nature of its elements is still to be achieved. We propose here a description of $\check{H}^1$ based on the notion of $\mathcal{F}$-torsors, as well as some considerations on the connecting homomorphism $\gamma$. 

\subsection{The connecting homomorphisms $\gamma$}

The first step in understanding cohomological obstructions is studying the connecting homomorphisms.
We present here some insights on how the properties of $\gamma$ can give us information on the type of contextuality of an empirical model. 


\begin{proposition}\label{prop: characterisation}
Let $\mathcal{F}$ be an abelian presheaf representing an empirical model $\mathcal{S}$ on a scenario $\langle X, \mathcal{M}, O\rangle$. The model is cohomologically strongly contextual if and only if $\gamma_C$ is injective for all $C\in\mathcal{M}$. 
\end{proposition}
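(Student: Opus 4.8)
The plan is to unwind both sides of the claimed equivalence into statements about the connecting homomorphism $\gamma_C$ and to connect them via Proposition~\ref{prop: main} together with the identification $\psi^0:\mathcal{F}(C)\xrightarrow{\sim}\check{H}^0(\mathcal{M},\mathcal{F}\mid_{C})$. Recall that $\mbox{CSC}(\mathcal{S})$ means $\mbox{CLC}(\mathcal{S},s)$ for every local section $s$ of $\mathcal{S}$, i.e. $\gamma_{C}(s)\neq 0$ for every $C\in\mathcal{M}$ and every $s\in\mathcal{S}(C)$ (identifying $s$ with $i_C(s)\in\mathcal{F}(C)$ and using $\psi^0$ to regard it as an element of $\check{H}^0(\mathcal{M},\mathcal{F}\mid_C)$). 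On the other side, ``$\gamma_C$ injective for all $C$'' is a statement about the whole group $\mathcal{F}(C)\cong\check{H}^0(\mathcal{M},\mathcal{F}\mid_C)$, not just the image of $\mathcal{S}(C)$. So the content of the proposition is that these two a priori different conditions coincide.

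For the direction ($\Leftarrow$): assume each $\gamma_C$ is injective. Since $\mathcal{F}$ represents $\mathcal{S}$, for any $C\in\mathcal{M}$ and $s\in\mathcal{S}(C)$ we have $i_C(s)\neq 0\in\mathcal{F}(C)$, hence $\psi^0(i_C(s))\neq 0$ in $\check{H}^0(\mathcal{M},\mathcal{F}\mid_C)$ because $\psi^0$ is an isomorphism. By injectivity of $\gamma_C$, $\gamma_C(\psi^0(i_C(s)))\neq 0$, which is exactly $\mbox{CLC}(\mathcal{S},s)$. As this holds for all $C$ and all $s\in\mathcal{S}(C)$, we get $\mbox{CSC}(\mathcal{S})$.

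For the direction ($\Rightarrow$): here I would argue contrapositively. Suppose some $\gamma_{C_0}$ is \emph{not} injective; pick $0\neq r_0\in\mathcal{F}(C_0)$ with $\gamma_{C_0}(r_0)=0$. By Proposition~\ref{prop: main}, $r_0$ extends to a compatible family $\{r_C\in\mathcal{F}(C)\}_{C\in\mathcal{M}}$ of $\mathcal{F}$. Now I need to produce from this an actual local section $s$ of $\mathcal{S}$ which is \emph{not} cohomologically contextual, i.e. which lies in a compatible family of $\mathcal{F}$; then $\gamma$ vanishes on (the image of) $s$ and $\mbox{CSC}(\mathcal{S})$ fails. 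The natural move is to use the structure of $\mathcal{F}=F_\mathbb{Z}\mathcal{S}$: write $r_{C_0}=\sum_k n_k\,[s_k]$ as a $\mathbb{Z}$-linear combination of sections $s_k\in\mathcal{S}(C_0)$; the augmentation (send each generator to $1$) applied contextwise to the family $\{r_C\}$ shows that the total ``degree'' is constant across contexts by compatibility, and one shows that at least one of the $s_k$ occurring in $r_{C_0}$ must itself be extendable — more precisely, a compatible family for a free abelian presheaf decomposes (after restricting to fibres/connected components of the bundle) into pieces on which one can extract an honest section-valued compatible family of $\mathcal{S}$, exactly as in the Hardy and $(2,2,4)$ examples where the non-extendable section sits inside a linear-combination compatible family. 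This extracted section $s$ satisfies $\neg\mbox{CLC}(\mathcal{S},s)$, contradicting $\mbox{CSC}(\mathcal{S})$.

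The main obstacle is precisely this last extraction step in the ($\Rightarrow$) direction: going from ``$\mathcal{F}(C_0)$ has a nonzero element extending to a compatible family of $\mathcal{F}$'' to ``\emph{some section of $\mathcal{S}$} extends to a compatible family of $\mathcal{F}$.'' A priori the witnessing element $r_0$ might be a genuine formal combination like $[s]-[s']+[s'']$ with no single summand extendable; one must rule this out or show it still forces some $\mathcal{S}$-section to be non-\mbox{CLC}. I expect the cleanest route is to exploit that $\mathcal{S}(C)\neq\emptyset$ for every $C$ (condition~\ref{cond1}) together with flasqueness beneath the cover: fix any choice of sections $t_C\in\mathcal{S}(C)$ and consider the compatible family $\{[t_C]\}$ — wait, that need not be compatible — so instead one argues that the existence of \emph{one} nonzero compatible family in $\mathcal{F}$ forces, via $\psi^0$ being an isomorphism onto $\check H^0$ and the fact that $\check H^0\ne 0$, that $\gamma_{C}$ has nontrivial kernel for \emph{every} $C$, and then that this kernel, being a nonzero subgroup of the free abelian group $\mathcal{F}(C)=F_{\mathbb Z}\mathcal{S}(C)$, must meet the generating set $\mathcal{S}(C)$ after a change of basis adapted to the bundle's fibre decomposition. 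Making this adaptation precise — essentially showing the obstruction-vanishing locus is a ``saturated'' sub-object compatible with the generators — is the real work; everything else is formal diagram-chasing with Proposition~\ref{prop: main} and the isomorphism $\psi^0$.
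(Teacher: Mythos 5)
Your ($\Leftarrow$) direction is correct and is exactly the argument in the paper. The genuine problem is the ($\Rightarrow$) direction, and it is precisely the step you flag as ``the real work'': the extraction you hope for does not exist, because the implication you are trying to establish --- $\mbox{CSC}(\mathcal{S})$, with the quantifier ranging over sections $s\in\mathcal{S}(C)$ as in the definitions of Section \ref{sec: sheaf cohomology}, implies $\ker(\gamma_C)=0$ on all of $\mathcal{F}(C)=F_{\mathbb{Z}}\mathcal{S}(C)$ --- is false as stated. The PR box is a counterexample: at $C_0=\{a_1,b_1\}$ take $r_0:=[(0,0)]+[(1,1)]\in F_{\mathbb{Z}}\mathcal{S}(C_0)$, and extend it by $[(0,0)]+[(1,1)]$ at $\{a_1,b_2\}$ and $\{a_2,b_1\}$ and by $[(0,1)]+[(1,0)]$ at $\{a_2,b_2\}$; every restriction to a single measurement equals $[0]+[1]$ in the corresponding fibre, so this is a compatible family of $F_{\mathbb{Z}}\mathcal{S}$, and Proposition \ref{prop: main} gives $\gamma_{C_0}(r_0)=0$ with $r_0\neq 0$. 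Yet no single generator extends (chasing one generator around the square forces the usual parity contradiction), so the PR box is $\mbox{CSC}$ in the sense of the stated definition --- indeed the paper itself lists PR boxes among the models whose strong contextuality cohomology detects. Hence your intended bridge from ``no $\mathcal{S}$-section extends to a compatible family of $\mathcal{F}$'' to ``no nonzero element of $\mathcal{F}(C)$ extends'' cannot be built by any saturation or change-of-basis argument: the witnessing element can genuinely be a formal combination none of whose summands is extendable, exactly the situation you said one must rule out.

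The paper's own proof does not attempt this bridge. In its forward direction it reads $\mbox{CSC}(\mathcal{S})$ as ``$\gamma_C(s)\neq 0$ for all sections $s\in\mathcal{F}(C)$'', i.e.\ the quantifier silently runs over all elements of the representing presheaf rather than over the $\mathcal{S}$-sections only. Under that strictly stronger reading the proposition is little more than the observation that a group homomorphism is injective iff its kernel is trivial (with Proposition \ref{prop: main} translating vanishing of the obstruction into extendability in $\mathcal{F}$), and the only substantive content is your ($\Leftarrow$) argument together with the fact that $i_C(s)\neq 0$ for all $\mathcal{S}$-sections. So you have correctly located where the real mathematical content would have to lie; the resolution is not to complete your extraction step --- it cannot be completed --- but to recognise that the statement holds only under the stronger, ``all of $\mathcal{F}(C)$'' reading of cohomological strong contextuality which the paper's proof implicitly adopts, and which does not coincide with the definition of $\mbox{CSC}$ given earlier in the paper.
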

\begin{proof}
Suppose $\mathcal{S}$ is cohomologically strongly contextual. By Proposition \ref{prop: main}, we have $\gamma_C(s)\neq0$ for all sections $s\in\mathcal{F}(C)$ and all contexts $C\in\mathcal{M}$. In other words, $\ker(\gamma_C)=0$ for all $C\in\mathcal{M}$. For the converse, suppose that $\ker(\gamma_C)=0$ for all $C\in\mathcal{M}$. Then, every non-zero local section of $\mathcal{F}(C)$ has a non-zero cohomological obstruction $\gamma_C(s)$. Thus the model is cohomologically strongly contextual. 
\end{proof}
Thanks to this result, we can give a lower bound for the cardinality of $\check{H}^1(\mathcal{M},\mathcal{F}_{\tilde{C}_0})$ in the case of cohomologically strongly contextual models:
\[
\mbox{\mbox{CSC}}(\mathcal{S})\Rightarrow |\check{H}^1(\mathcal{M},\mathcal{F}_{\tilde{C}_0})|\ge |\mathcal{F}(C_0)|.
\]
On the other hand, given a $\mbox{CLC}\wedge\neg \mbox{\mbox{CSC}}$ model, Proposition \ref{prop: characterisation} implies that two distinct sections may give rise to the same non-zero cohomological obstruction. 

%

The injectivity of a single connecting homomoprhism is a sufficient condition for the strong contextuality of an empirical model.

\begin{proposition}
Let $\mathcal{F}$ be an abelian presheaf representing an empirical model $\mathcal{S}$ on a scenario $\langle X, \mathcal{M}, O\rangle$. If there exists a $C_0\in\mathcal{M}$ such that $\gamma_{C_0}$ is injective, then $\mathcal{S}$ is strongly contextual.
\end{proposition}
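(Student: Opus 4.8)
The plan is to argue by contradiction, reducing everything to Proposition~\ref{prop: main}. Suppose that $\gamma_{C_0}$ is injective (recall its domain is $\check{H}^0(\mathcal{M},\mathcal{F}\mid_{C_0})\cong\mathcal{F}(C_0)$ via the isomorphism $\psi^0$ of \eqref{equ: homomorphism}, so injectivity means $\ker(\gamma_{C_0})=0$, i.e.\ $\gamma_{C_0}(r_0)=0$ forces $r_0=0$), and assume for contradiction that $\mathcal{S}$ is \emph{not} strongly contextual, so there is a global section $g\in\mathcal{S}(X)$.

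First I would restrict $g$ to the cover: the family $\{g\mid_C\in\mathcal{S}(C)\}_{C\in\mathcal{M}}$ is compatible in $\mathcal{S}$ (restrictions of a single section always agree on overlaps). Applying the presheaf morphism $i:\mathcal{S}\hookrightarrow\mathcal{F}$, which commutes with the restriction maps, yields a compatible family $\{i_C(g\mid_C)\in\mathcal{F}(C)\}_{C\in\mathcal{M}}$ in $\mathcal{F}$. Set $r_0:=i_{C_0}(g\mid_{C_0})\in\mathcal{F}(C_0)$. By construction this compatible family has $C_0$-component equal to $r_0$, so Proposition~\ref{prop: main} gives $\gamma_{C_0}(r_0)=0$.

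Now I would invoke the two defining properties of a representing presheaf. On the one hand $g\mid_{C_0}\in\mathcal{S}(C_0)$ is a local section of the empirical model, and since $\mathcal{F}$ represents $\mathcal{S}$ we have $i_{C_0}(s_{C_0})\neq 0$ for every $s_{C_0}\in\mathcal{S}(C_0)$; hence $r_0\neq 0$. On the other hand $\gamma_{C_0}(0)=0$, because the all-zero family $\{0\in\mathcal{F}(C)\}_{C\in\mathcal{M}}$ is compatible and witnesses, via Proposition~\ref{prop: main}, the vanishing of the obstruction of $0$. Thus $r_0$ and $0$ are two elements of $\mathcal{F}(C_0)$ with the same image under $\gamma_{C_0}$, so injectivity forces $r_0=0$, contradicting $r_0\neq0$. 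Therefore no global section exists and $\mathcal{S}$ is strongly contextual.

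The argument is essentially a one-line consequence of Proposition~\ref{prop: main} once the bookkeeping is set up; the only point requiring care is the identification of the domain of $\gamma_{C_0}$ with $\mathcal{F}(C_0)$ and the use of the condition $i_{C_0}(s_{C_0})\neq0$ for genuine local sections of $\mathcal{S}$ — this is exactly what prevents the (always obstruction-free) zero section of $\mathcal{F}$ from spoiling the conclusion. I expect no real obstacle beyond making sure that a single injective $\gamma_{C_0}$ suffices: it does, because a hypothetical global section restricts \emph{simultaneously} to all contexts, in particular to $C_0$, so one context's connecting homomorphism already detects it.
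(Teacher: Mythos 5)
Your argument is correct and is essentially the paper's own proof: both proceed by contradiction from a compatible family of $\mathcal{S}$ obtained from non-strong-contextuality, take its $C_0$-component, conclude its obstruction vanishes, and use the representing condition $i_{C_0}(s_{C_0})\neq 0$ to contradict injectivity of $\gamma_{C_0}$. The only cosmetic difference is that you apply Proposition~\ref{prop: main} directly to the image family in $\mathcal{F}$, whereas the paper routes the same fact through Theorem~\ref{thm: main} (the contrapositive $\neg\mbox{LC}\Rightarrow\neg\mbox{CLC}$), which rests on the same proposition.
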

\begin{proof}
Suppose there is an injective $\gamma_{C_0}$. If $\mathcal{S}$ is not strongly contextual, there must exist a context $\bar{C}\in\mathcal{M}$ and a section $s\in\mathcal{S}(\bar{C})$ that is extendable to a compatible family $\sigma:=\{s_C\in\mathcal{S}(C)\}_{C\in\mathcal{M}}$. Consider the section $s_{C_0}$ of this family. It is trivially an extendable local section since it is part of the compatible family $\sigma$, thus $\neg \mbox{LC}(\mathcal{S}, s_{C_0})$. By Theorem \ref{thm: main}, this implies $\neg \mbox{CLC}(\mathcal{S}, s_{C_0})$ or, equivalently, $\gamma_{C_0}(s_{C_0})=0$. $\mathcal{F}$ represents $\mathcal{S}$, thus $s_{C_0}$ is non-zero in $\mathcal{F}(C_0)$, hence we conclude that $\ker(\gamma_{C_0})\neq 0$, which means that $\gamma_{C_0}$ is not injective.

\end{proof}
Notice that these two propositions clarify how CSC is a stronger condition than SC: we need all the connecting homomorphisms $\{\gamma_C\}_{C\in \mathcal{M}}$ to be injective in order to conclude that a model is $\mbox{CSC}$, but it is sufficient to have a single injective $\gamma_{C}$ to conclude that it is $\mbox{SC}$.

\subsection{$\mathcal{F}$-torsors and their relation to $\check{H}^1$}
We start by recalling the main definitions; the reader not familiar with the concept of torsor relative to a presheaf might refer to \cite{Alexei} for deeper insights. Let $\mathcal{F}:\textbf{Open}(X)^{op}\rightarrow \textbf{AbGrp}$ be a presheaf of abelian groups over a topological space $X$. An $\mathcal{F}$-\textbf{presheaf} is a presheaf of sets $T$ over $X$ equipped with a morphism of presheaves $\phi:\mathcal{F}\times T\Rightarrow T$ such that, for each open $U\subseteq X$, the map 
\[
\phi_U:\mathcal{F}(U)\times T(U)\rightarrow T(U)::(g, t)\mapsto g\LargerCdot t
\]
is a left action of $\mathcal{F}(U)$ on $T(U)$. Given two $\mathcal{F}$-presheaves $T$ and $T'$, a \textbf{morphism of} $\mathcal{F}$-\textbf{presheaves} from $T$ to $T'$ is a morphism of presheaves $\psi: T\Rightarrow T'$ such that $\psi_U$ is equivariant for all open $U\subseteq X$. An $\mathcal{F}$-presheaf $T$ is called an $\mathcal{F}$-torsor if 
\begin{enumerate}
\item There exists an open cover $\mathcal{V}$ of $X$  that \textbf{trivialises} $T$, i.e. such that $T(V)\neq \emptyset$ for all $V\in\mathcal{V}$.  
\item The action $\phi_U: \mathcal{F}(U)\times T(U)\rightarrow T(U)$ is simply transitive.
\end{enumerate}
The simplest example of $\mathcal{F}$-torsor is the \textbf{trivial} $\mathcal{F}$-\textbf{torsor} $\mathcal{U}\mathcal{F}$,\footnote{Here, $\mathcal{U}:\textbf{AbGrp}\rightarrow \textbf{Set}$ denotes the forgetful functor. To avoid confusion, we will not explicitly show its presence: the trivial $\mathcal{F}$-torsor will be simply denoted by $\mathcal{F}$.} where the action is simply given by $g.\mathcal{U}(h):=\mathcal{U}(g+h)$. We denote by $\mbox{Trs}_\mathcal{F}$ the set of isomorphism classes of $\mathcal{F}$-torsors. It can be proved that an $\mathcal{F}$-torsor $T$ is isomorphic to the trivial $\mathcal{F}$-torsor if and only if $T(X)\neq\emptyset$. 

Now, we adapt this discussion to the case of empirical models. 
Let $\mathcal{F}$ be an abelian presheaf representing an empirical model $\mathcal{S}$ on a scenario $\langle X, \mathcal{M}, O\rangle$, with $\mathcal{M}:=\{C_i\}_{i\in I}$. Let 
\[
\mbox{Trs}(\mathcal{M},\mathcal{F}):=\left\{ T\in \mbox{Trs}_\mathcal{F}\mid T \text{ is trivialised by } \mathcal{M}\right\},
\]
seen as a pointed set with the isomorphism class of the trivial $\mathcal{F}$-torsor as distinguished element. We have the following remarkable result, which is a readaptation of a known correspondence between torsors and cohomology (the proof can be found in Appendix \ref{apx: proofs2}).

\begin{proposition}\label{prop: bijection}
There is a bijection of pointed sets
$\rm Trs(\mathcal{M},\mathcal{F})$ $\cong \check{H}^1(\mathcal{M},\mathcal{F})$.
\end{proposition}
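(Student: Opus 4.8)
The plan is to establish the bijection $\mathrm{Trs}(\mathcal{M},\mathcal{F})\cong\check H^1(\mathcal{M},\mathcal{F})$ by constructing maps in both directions using \v Cech cocycles, and then checking that they are mutually inverse and respect the distinguished elements. This is the presheaf-theoretic incarnation of the classical fact that isomorphism classes of torsors are classified by $H^1$; the only subtlety is that our \v Cech cohomology is taken with respect to the \emph{nerve} $\mathcal{N}(\mathcal{M})$ of the measurement cover rather than an honest topological open cover, so I must be careful that intersections $|\sigma|=C_{i_0}\cap\dots\cap C_{i_q}$ play the role of the usual multiple intersections and that the flasqueness-beneath-the-cover condition (condition \ref{cond2}) supplies the local sections I need.

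First I would go from torsors to cohomology. Given $T\in\mathrm{Trs}(\mathcal{M},\mathcal{F})$, trivialisation by $\mathcal{M}$ means $T(C_i)\neq\emptyset$ for every $i\in I$; choose $t_i\in T(C_i)$. For each $1$-simplex $\sigma=(C_i,C_j)$ of the nerve, the restrictions $t_i\mid_{|\sigma|}$ and $t_j\mid_{|\sigma|}$ both lie in $T(|\sigma|)$, which is non-empty and on which $\mathcal{F}(|\sigma|)$ acts simply transitively; hence there is a unique $g_{ij}\in\mathcal{F}(|\sigma|)$ with $g_{ij}\LargerCdot(t_j\mid_{|\sigma|})=t_i\mid_{|\sigma|}$. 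I would check that $(g_{ij})$ is a \v Cech $1$-cocycle: restricting the defining equations to a triple intersection $|(C_i,C_j,C_k)|$ and using simple transitivity forces the additive cocycle identity $\rho(g_{ij})-\rho(g_{ik})+\rho(g_{jk})=0$, i.e. $\delta^1$ of the cochain vanishes. Changing the choice of the $t_i$ to $t_i'=h_i\LargerCdot t_i$ changes $(g_{ij})$ by the coboundary $\delta^0$ of $(h_i)$, and an isomorphism of $\mathcal{F}$-torsors $T\to T'$ transports trivialising sections to trivialising sections compatibly with the action, so the class in $\check H^1(\mathcal{M},\mathcal{F})$ is well-defined on isomorphism classes; the trivial torsor $\mathcal{F}$ admits the global section $0\in\mathcal{F}(X)$, whose restrictions give $g_{ij}=0$, so the distinguished element maps to $0$.

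For the reverse map, given a cocycle $(g_{ij})\in Z^1(\mathcal{M},\mathcal{F})$ I would build a torsor $T_g$ by gluing: on each $C_i$ take a copy of the trivial torsor $\mathcal{F}\mid_{C_i}$ (i.e. $U\mapsto\mathcal{F}(U\cap C_i)$, which is flasque beneath the cover), and glue over $|(C_i,C_j)|$ via translation by $g_{ij}$; the cocycle condition is exactly what makes the gluing associative over triple intersections. Concretely $T_g(U)$ is the set of families $(s_i\in\mathcal{F}(U\cap C_i))_i$ with $s_i\mid_{U\cap|(C_i,C_j)|}=g_{ij}\mid_{U}+s_j\mid_{U\cap|(C_i,C_j)|}$, with the diagonal $\mathcal{F}(U)$-action; $T_g$ is trivialised by $\mathcal{M}$ (over $C_i$ the family $(\,0,\, -g_{ij}\mid,\ \dots)$ built from the cocycle relations is a section, using connectedness of $\mathcal{M}$ to propagate), and the action is simply transitive because it is so componentwise. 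Cohomologous cocycles give isomorphic torsors (the isomorphism is translation by the $0$-cochain witnessing the coboundary), so this descends to a map $\check H^1(\mathcal{M},\mathcal{F})\to\mathrm{Trs}(\mathcal{M},\mathcal{F})$, sending $0$ to the class of $\mathcal{F}$.

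Finally I would check the two composites are identities: starting from a cocycle, building $T_g$ and reading off a new cocycle from the canonical trivialising sections returns $(g_{ij})$ on the nose; starting from a torsor $T$ with chosen $t_i$, the torsor $T_{g}$ reconstructed from the associated cocycle is isomorphic to $T$ via $t\mapsto(\,t_i\text{-component given by the unique }s_i\text{ with }s_i\LargerCdot(t_i\mid)=t\mid\,)$, which is well-defined and equivariant by simple transitivity. The main obstacle I anticipate is not any single computation but bookkeeping: making the gluing description of $T_g$ precise as a genuine presheaf of sets on all of $\mathbf{Open}(X)$ (not just on the $C_i$) and verifying it is an honest $\mathcal{F}$-torsor — in particular exhibiting trivialising sections over each $C_i$, which is where connectedness of $\mathcal{M}$ and flasqueness beneath the cover both get used — and then confirming that all constructions are natural enough to pass to isomorphism classes and to respect the base points. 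The cocycle/coboundary verifications themselves are the standard torsor-cohomology dictionary and should go through routinely once the framework is set up.
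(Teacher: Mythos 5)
Your proposal follows essentially the same route as the paper's proof: extract a cocycle from a choice of trivialising sections via simple transitivity, build the inverse torsor as the presheaf of families $(s_i)$ whose pairwise differences are prescribed by the cocycle with the (anti)diagonal $\mathcal{F}$-action, and verify well-definedness and that the composites are identities. The only cosmetic difference is that the paper exhibits the trivialisation over each $C_j$ by an explicit equivariant isomorphism $\mathcal{F}\mid_{C_j}\cong g([z])\mid_{C_j}$ built directly from the cocycle values, so no appeal to connectedness of $\mathcal{M}$ is needed at that step.
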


This bijection equips $\mbox{Trs}(\mathcal{M},\mathcal{F})$ with a group structure. The addition of two $\mathcal{F}$-torsors is defined componentwise at each subset $U\subseteq X$ by $g([z])(U)+g([w])(U):=g([z]+[w])(U)$ for all $[z], [w]\in\check{H}^1(\mathcal{M},\mathcal{F})$ (refer to Appendix \ref{apx: proofs2} for the definition of the bijection $g:\check{H}^1(\mathcal{M},\mathcal{F})\rightarrow \mbox{Trs}(\mathcal{M},\mathcal{F})$). Clearly, the above bijection becomes an isomorphism of abelian groups with respect to this addition. 

This results implies that the elements of the first cohomology group $\check{H}^1(\mathcal{M},\mathcal{F}_{\tilde{C}_0})$ relative to a context $C_0\in\mathcal{M}$ (and, in particular, cohomological obstructions) can be seen as isomorphism classes of $\mathcal{F}_{\tilde{C}_0}$-torsors trivialised by the measurement cover $\mathcal{M}$.

Until now, elements of $\check{H}^1$ could only be identified via the abstract equations imposed by the rigid definition of cohomology. The reason why we believe the new description might be more satisfactory, is that despite their seemingly sophisticated definition, torsors are rather simple objects, as explained by Baez in \cite{Baez}. In the simplest terms, an $\mathcal{F}_{\tilde{C}_0}$-torsor is the presheaf $\mathcal{F}_{\tilde{C}_0}$ having lost its identity in each group $\mathcal{F}_{\tilde{C}_0}(U)$, for $U\subseteq X$. Rather than describing the local sections at each $\mathcal{F}_{\tilde{C}_0}(U)$, it measures their difference. We aim to further develop this viewpoint in future work.

\section*{Conclusions}
Sheaf cohomology is a powerful method for the detection of contextuality. However, our work has highlighted some decisive limitations concerning \v{Cech cohomology}. Indeed, it cannot provide a full invariant for contextuality (neither logical nor strong) even under reasonably strong assumptions on symmetry and connectedness of the cover, and, although obstructions can be generalised to higher cohomology groups, they cannot be applied to the study of no-signalling empirical models. 
In future work, we aim to redevelop the sheaf-cohomological study of contextuality from a different viewpoint. The machinery of obstruction theory, a branch of homotopy theory that deals with the extendability of maps, allows the definition of obstructions to the extension of continuous functions in a cohomology theory with coefficients in the homotopy groups. This promising approach will require an adaptation of the concept of empirical model to fit this framework. A possibility would be to formalise the bundle diagram representation and define models as fiber bundles or, more generally, as fibrations. This would allow the definition of Postnikov towers \cite{Postnikov}, which give rise to cohomological obstructions. 

In the last section, we have provided an alternative description of the first cohomology group using torsors relative to a presheaf. Although this approach is still at a developing stage, it allows us to understand cohomological obstructions as a concrete mathematical object. The implications of this new viewpoint will be considered in future work.

\section*{Acknowledgements}
I would like to thank Samson Abramsky for his guidance, Rui Soares Barbosa for the numerous constructive discussions, and Kohei Kishida for his interesting suggestions. Support from the Oxford-Google Deepmind Graduate Scholarship and the
EPSRC Doctoral Training Partnership is also gratefully acknowledged.

\nocite{*}
\bibliographystyle{eptcs}
\bibliography{generic}

\begin{thebibliography}{10}
\providecommand{\bibitemdeclare}[2]{}
\providecommand{\surnamestart}{}
\providecommand{\surnameend}{}
\providecommand{\urlprefix}{Available at }
\providecommand{\url}[1]{\texttt{#1}}
\providecommand{\href}[2]{\texttt{#2}}
\providecommand{\urlalt}[2]{\href{#1}{#2}}
\providecommand{\doi}[1]{doi:\urlalt{http://dx.doi.org/#1}{#1}}
\providecommand{\bibinfo}[2]{#2}

\bibitemdeclare{inbook}{Abramsky6}
\bibitem{Abramsky6}
\bibinfo{author}{S.~\surnamestart Abramsky\surnameend} (\bibinfo{year}{2013}):
  \emph{\bibinfo{title}{In Search of Elegance in the Theory and Practice of
  Computation: Essays Dedicated to Peter Buneman}}, chapter
  \bibinfo{chapter}{Relational Databases and Bell's Theorem}, pp.
  \bibinfo{pages}{13--35}.
\newblock \bibinfo{publisher}{Springer Berlin Heidelberg},
  \bibinfo{address}{Berlin, Heidelberg}, \doi{10.1007/978-3-642-41660-6\_2}.

\bibitemdeclare{inproceedings}{Abramsky2}
\bibitem{Abramsky2}
\bibinfo{author}{S.~\surnamestart Abramsky\surnameend},
  \bibinfo{author}{R.~Soares \surnamestart Barbosa\surnameend},
  \bibinfo{author}{K.~\surnamestart Kishida\surnameend},
  \bibinfo{author}{R.~\surnamestart Lal\surnameend} \&
  \bibinfo{author}{S.~\surnamestart Mansfield\surnameend}
  (\bibinfo{year}{2015}): \emph{\bibinfo{title}{Contextuality, cohomology and
  paradox}}.
\newblock In \bibinfo{editor}{S.~\surnamestart Kreutzer\surnameend}, editor:
  {\sl \bibinfo{booktitle}{24th EACSL Annual Conference on Computer Science
  Logic (CSL 2015)}}, {\sl \bibinfo{series}{Leibniz International Proceedings
  in Informatics (LIPIcs)}}~\bibinfo{volume}{41}, \bibinfo{publisher}{Schloss
  Dagstuhl--Leibniz-Zentrum fuer Informatik}, \bibinfo{address}{Dagstuhl,
  Germany}, pp. \bibinfo{pages}{211--228}, \doi{10.4230/LIPIcs.CSL.2015.211}.

\bibitemdeclare{article}{Abramsky1}
\bibitem{Abramsky1}
\bibinfo{author}{S.~\surnamestart Abramsky\surnameend} \&
  \bibinfo{author}{A.~\surnamestart Brandenburger\surnameend}
  (\bibinfo{year}{2011}): \emph{\bibinfo{title}{The Sheaf-Theoretic Structure
  of Non-Locality and Contextuality}}.
\newblock {\sl \bibinfo{journal}{New Journal of Physics}} \bibinfo{volume}{13},
  pp. \bibinfo{pages}{113036--113075}, \doi{10.1088/1367-2630/13/11/113036}.

\bibitemdeclare{inproceedings}{Abramsky5}
\bibitem{Abramsky5}
\bibinfo{author}{S.~\surnamestart Abramsky\surnameend},
  \bibinfo{author}{G.~\surnamestart Gottlob\surnameend} \&
  \bibinfo{author}{P.~\surnamestart Kolaitis\surnameend}
  (\bibinfo{year}{2013}): \emph{\bibinfo{title}{Robust Constraint Satisfaction
  and Local Hidden Variables in Quantum Mechanics}}.
\newblock In: {\sl \bibinfo{booktitle}{Artificial Intelligence (IJCAI '13),
  2013 23rd International Joint Conference on}}, \bibinfo{publisher}{AAAI
  Press}, pp. \bibinfo{pages}{440--446}.
\newblock \urlprefix\url{http://ijcai.org/papers13/Papers/IJCAI13-073.pdf}.

\bibitemdeclare{article}{Abramsky3}
\bibitem{Abramsky3}
\bibinfo{author}{S.~\surnamestart Abramsky\surnameend},
  \bibinfo{author}{S.~\surnamestart Mansfield\surnameend} \&
  \bibinfo{author}{R.~Soares \surnamestart Barbosa\surnameend}
  (\bibinfo{year}{2012}): \emph{\bibinfo{title}{The cohomology of non-locality
  and contextuality}}.
\newblock {\sl \bibinfo{journal}{Electronic Proceedings in Theoretical Computer
  Science}} \bibinfo{volume}{95 - Proceedings 8th International Workshop on
  Quantum Physics and Logic (QPL 2011), Nijmegen}, pp. \bibinfo{pages}{1--14},
  \doi{10.4204/EPTCS.95.1}.

\bibitemdeclare{}{Baez}
\bibitem{Baez}
\bibinfo{author}{J.~\surnamestart Baez\surnameend}:
  \emph{\bibinfo{title}{Torsors Made Easy}}.
\newblock \urlprefix\url{http://math.ucr.edu/home/baez/torsors.html}.

\bibitemdeclare{article}{Barbosa}
\bibitem{Barbosa}
\bibinfo{author}{R.~Soares \surnamestart Barbosa\surnameend}
  (\bibinfo{year}{2014}): \emph{\bibinfo{title}{On monogamy of non-locality and
  macroscopic averages: examples and preliminary results}}.
\newblock {\sl \bibinfo{journal}{Electronic Proceedings in Theoretical Computer
  Science}} \bibinfo{volume}{172 - Proceedings 11th International Workshop on
  Quantum Physics and Logic (QPL 2014), Kyoto}, pp. \bibinfo{pages}{36--55},
  \doi{10.4204/EPTCS.172.4}.

\bibitemdeclare{misc}{Rui}
\bibitem{Rui}
\bibinfo{author}{R.~Soares \surnamestart Barbosa\surnameend}
  (\bibinfo{year}{2015}): \emph{\bibinfo{title}{Contextuality in quantum
  mechanics and beyond}}.
\newblock \bibinfo{howpublished}{D.Phil. thesis, Oxford University}.

\bibitemdeclare{article}{Bell}
\bibitem{Bell}
\bibinfo{author}{J.~S. \surnamestart Bell\surnameend} (\bibinfo{year}{1964}):
  \emph{\bibinfo{title}{{On the Einstein Podolsky Rosen paradox}}}.
\newblock {\sl \bibinfo{journal}{Physics}}
  \bibinfo{volume}{1}(\bibinfo{number}{3}), pp. \bibinfo{pages}{195--200},
  \doi{10.1016/S0065-3276(08)60492-X}.

\bibitemdeclare{mastersthesis}{Gio}
\bibitem{Gio}
\bibinfo{author}{G.~\surnamestart Car{\`u}\surnameend} (\bibinfo{year}{2015}):
  \emph{\bibinfo{title}{Detecting Contextuality: Sheaf Cohomology and All vs
  Nothing Arguments}}.
\newblock Master's thesis, \bibinfo{school}{University of Oxford}.
\newblock \urlprefix\url{http://www.cs.ox.ac.uk/files/7608/Dissertation.pdf}.

\bibitemdeclare{article}{Daykin}
\bibitem{Daykin}
\bibinfo{author}{D.~E. \surnamestart Daykin\surnameend} \&
  \bibinfo{author}{R.~\surnamestart H{\"a}ggkvist\surnameend}
  (\bibinfo{year}{1981}): \emph{\bibinfo{title}{Degrees giving independent
  edges in a hypergraph}}.
\newblock {\sl \bibinfo{journal}{Bulletin of the Australian Mathematical
  Society}} \bibinfo{volume}{23}(\bibinfo{number}{01}), pp.
  \bibinfo{pages}{103--109}, \doi{10.1017/S0004972700006924}.

\bibitemdeclare{article}{Greenberger}
\bibitem{Greenberger}
\bibinfo{author}{D.~M. \surnamestart Greenberger\surnameend},
  \bibinfo{author}{M.~A. \surnamestart Horne\surnameend},
  \bibinfo{author}{A.~\surnamestart Shimony\surnameend} \&
  \bibinfo{author}{A.~\surnamestart Zeilinger\surnameend}
  (\bibinfo{year}{1990}): \emph{\bibinfo{title}{Bell's theorem without
  inequalities}}.
\newblock {\sl \bibinfo{journal}{American Journal of Physics}}
  \bibinfo{volume}{58}(\bibinfo{number}{12}), pp. \bibinfo{pages}{1131--1143},
  \doi{10.1119/1.16243}.

\bibitemdeclare{}{Greenberger2}
\bibitem{Greenberger2}
\bibinfo{author}{D.~M. \surnamestart Greenberger\surnameend},
  \bibinfo{author}{M.~A. \surnamestart Horne\surnameend} \&
  \bibinfo{author}{A.~\surnamestart Zeilinger\surnameend}
  (\bibinfo{year}{1989}): \emph{\bibinfo{title}{Going beyond Bell's Theorem}},
  \doi{10.1007/978-94-017-0849-4\_10}.

\bibitemdeclare{article}{Hardy2}
\bibitem{Hardy2}
\bibinfo{author}{L.~\surnamestart Hardy\surnameend} (\bibinfo{year}{1992}):
  \emph{\bibinfo{title}{Quantum mechanics, local realistic theories, and
  Lorentz-invariant realistic theories}}.
\newblock {\sl \bibinfo{journal}{Phys. Rev. Lett.}}
  \bibinfo{volume}{68}(\bibinfo{number}{20}), pp. \bibinfo{pages}{2981--2984},
  \doi{10.1103/PhysRevLett.68.2981}.

\bibitemdeclare{article}{Hardy}
\bibitem{Hardy}
\bibinfo{author}{L.~\surnamestart Hardy\surnameend} (\bibinfo{year}{1993}):
  \emph{\bibinfo{title}{Nonlocality for two particles without inequalities for
  almost all entangled states}}.
\newblock {\sl \bibinfo{journal}{Phys. Rev. Lett.}}
  \bibinfo{volume}{71}(\bibinfo{number}{11}), pp. \bibinfo{pages}{1665--1668},
  \doi{10.1103/PhysRevLett.71.1665}.

\bibitemdeclare{article}{Howard}
\bibitem{Howard}
\bibinfo{author}{M.~\surnamestart Howard\surnameend},
  \bibinfo{author}{J.~\surnamestart Wallman\surnameend},
  \bibinfo{author}{V.~\surnamestart Veitch\surnameend} \&
  \bibinfo{author}{J.~\surnamestart Emerson\surnameend} (\bibinfo{year}{2014}):
  \emph{\bibinfo{title}{Contextuality supplies the "magic" for quantum
  computation}}.
\newblock {\sl \bibinfo{journal}{Nature}}
  \bibinfo{volume}{510}(\bibinfo{number}{7505}), pp. \bibinfo{pages}{351--355},
  \doi{10.1038/nature13460}.

\bibitemdeclare{incollection}{Kochen}
\bibitem{Kochen}
\bibinfo{author}{S.~\surnamestart Kochen\surnameend} \& \bibinfo{author}{E.~P.
  \surnamestart Specker\surnameend} (\bibinfo{year}{1975}):
  \emph{\bibinfo{title}{The Problem of Hidden Variables in Quantum Mechanics}}.
\newblock In: {\sl \bibinfo{booktitle}{The Logico-Algebraic Approach to Quantum
  Mechanics}}, {\sl \bibinfo{series}{The University of Western Ontario Series
  in Philosophy of Science}}~\bibinfo{volume}{5a}, \bibinfo{publisher}{Springer
  Netherlands}, pp. \bibinfo{pages}{293--328},
  \doi{10.1007/978-94-010-1795-4\_17}.

\bibitemdeclare{misc}{Shane}
\bibitem{Shane}
\bibinfo{author}{S.~\surnamestart Mansfield\surnameend} (\bibinfo{year}{2013}):
  \emph{\bibinfo{title}{The Mathematical Structure of Non-Locality \&
  Contextuality}}.
\newblock \bibinfo{howpublished}{D.Phil. thesis, Oxford University}.
\newblock
  \urlprefix\url{http://www.cs.ox.ac.uk/people/shane.mansfield/DPhilThesis-ShaneMansfield.pdf}.

\bibitemdeclare{inproceedings}{ShaneRui}
\bibitem{ShaneRui}
\bibinfo{author}{S.~\surnamestart Mansfield\surnameend} \&
  \bibinfo{author}{R.~Soares \surnamestart Barbosa\surnameend}
  (\bibinfo{year}{2013}): \emph{\bibinfo{title}{Extendability in the
  sheaf-theoretic approach: Construction of {B}ell models from
  {K}ochen-{S}pecker models}}.
\newblock In: {\sl \bibinfo{booktitle}{Informal pre-proceedings of 10th
  Workshop on Quantum Physics and Logic (QPL 2013), ICFo Barcelona}}.
\newblock \urlprefix\url{http://arxiv.org/pdf/1402.4827.pdf}.

\bibitemdeclare{article}{Mermin2}
\bibitem{Mermin2}
\bibinfo{author}{N.~D. \surnamestart Mermin\surnameend} (\bibinfo{year}{1990}):
  \emph{\bibinfo{title}{Quantum mysteries revisited}}.
\newblock {\sl \bibinfo{journal}{American Journal of Physics}}
  \bibinfo{volume}{58}(\bibinfo{number}{8}), pp. \bibinfo{pages}{731--734},
  \doi{10.1119/1.16503}.

\bibitemdeclare{article}{Mermin4}
\bibitem{Mermin4}
\bibinfo{author}{N.~D. \surnamestart Mermin\surnameend} (\bibinfo{year}{1993}):
  \emph{\bibinfo{title}{Hidden variables and the two theorems of John Bell}}.
\newblock {\sl \bibinfo{journal}{Rev. Mod. Phys.}} \bibinfo{volume}{65}, pp.
  \bibinfo{pages}{803--815}, \doi{10.1103/RevModPhys.65.803}.

\bibitemdeclare{article}{Mermin3}
\bibitem{Mermin3}
\bibinfo{author}{N.D. \surnamestart Mermin\surnameend} (\bibinfo{year}{1990}):
  \emph{\bibinfo{title}{Simple unified form for the major no-hidden-variables
  theorems}}.
\newblock {\sl \bibinfo{journal}{Phys. Rev. Lett.}} \bibinfo{volume}{65}, pp.
  \bibinfo{pages}{3373--3376}, \doi{10.1103/PhysRevLett.65.3373}.

\bibitemdeclare{article}{Penrose}
\bibitem{Penrose}
\bibinfo{author}{R.~\surnamestart Penrose\surnameend} (\bibinfo{year}{1992}):
  \emph{\bibinfo{title}{On the Cohomology of Impossible Figures}}.
\newblock {\sl \bibinfo{journal}{Leonardo}}
  \bibinfo{volume}{25}(\bibinfo{number}{3/4}), pp. \bibinfo{pages}{245--247},
  \doi{10.2307/1575844}.

\bibitemdeclare{article}{Peres}
\bibitem{Peres}
\bibinfo{author}{A.~\surnamestart Peres\surnameend} (\bibinfo{year}{1990}):
  \emph{\bibinfo{title}{Incompatible results of quantum measurements}}.
\newblock {\sl \bibinfo{journal}{Physics Letters A}}
  \bibinfo{volume}{151}(\bibinfo{number}{3}), pp. \bibinfo{pages}{107 -- 108},
  \doi{10.1016/0375-9601(90)90172-K}.

\bibitemdeclare{article}{PR}
\bibitem{PR}
\bibinfo{author}{S.~\surnamestart Popescu\surnameend} \&
  \bibinfo{author}{D.~\surnamestart Rohrlich\surnameend}:
  \emph{\bibinfo{title}{Quantum nonlocality as an axiom}}.
\newblock {\sl \bibinfo{journal}{Foundations of Physics}}
  \bibinfo{volume}{24}(\bibinfo{number}{3}), pp. \bibinfo{pages}{379--385},
  \doi{10.1007/BF02058098}.

\bibitemdeclare{inproceedings}{Postnikov}
\bibitem{Postnikov}
\bibinfo{author}{M.~M. \surnamestart Postnikov\surnameend}
  (\bibinfo{year}{1951}): \emph{\bibinfo{title}{Determination of the homology
  groups of a space by means of the homotopy invariants}}.
\newblock In: {\sl \bibinfo{booktitle}{Dokl. Akad. Nauk. SSSR (N.S.)}},
  \bibinfo{volume}{76}, pp. \bibinfo{pages}{359--362}.

\bibitemdeclare{book}{Alexei}
\bibitem{Alexei}
\bibinfo{author}{A.~\surnamestart Skorobogatov\surnameend}
  (\bibinfo{year}{2001}): \emph{\bibinfo{title}{Torsors and Rational Points}}.
\newblock \bibinfo{publisher}{Cambridge University Press}.
\newblock \urlprefix\url{http://dx.doi.org/10.1017/CBO9780511549588}.
\newblock \bibinfo{note}{Cambridge Books Online}.

\end{thebibliography}

\appendix

\section{Explicit definition of the counterexample to conjecture 8.1 of \cite{Abramsky3}}\label{apx: model}
We give the explicit definition of the model introduced in Section \ref{sec: counterexample} as a possibility table:

\begin{table}[htbp]
\centering
\begin{tabular}{cc | cccccccccccccccc}
\hline
$A$ & $B$ & $00$ & $01$ & 10 & 02 & 20 & 03 & 30 & 11 & 12 & 21 & 13 & 31 & 22 & 23 & 32 & 33\\
\hline
$a_1$ & $b_1$ & 1 & 0 & 0 & 0 & 0 & 0 & 0 & 1 & 0 & 0 & 0 & 0 & 1 & 0 & 0 & 1\\
$a_1$ & $b_2$ & 1 & 0 & 1 & 0 & 0 & 0 & 0 & 1 & 0 & 1 & 0 & 0 & 1 & 0 & 1 & 1\\
$a_2$ & $b_1$ & 1 & 0 & 1 & 0 & 0 & 0 & 0 & 1 & 0 & 1 & 0 & 0 & 1 & 0 & 1 & 1\\
$a_2$ & $b_2$ & 0 & 1 & 0 & 0 & 0 & 0 & 1 & 0 & 1 & 0 & 0 & 0 & 0 & 1 & 0 & 0
\end{tabular}
\caption{The possibilistic empirical model pictured in Figure \ref{fig: counterexample}. It is strongly contextual but cohomologically logically non-contextual.}\label{tab: MODEL}
\end{table}

\section{Proofs of the propositions in Section \ref{sec: higher cohomology groups}}\label{apx: proofs}
\begin{proof}[Proof of Proposition \ref{thm: hierarchy}]
We will show the converse: $\neg \mbox{CLC}^q(\mathcal{S}, s_0)\Rightarrow \neg \mbox{CLC}^{q+1}(\mathcal{S}, s_0)$. 
Suppose $\neg\mbox{CLC}^q(\mathcal{S}, s_0)$, then $\gamma_{C_0}^q(s_0)=0$. By Lemma \ref{lem: general} there exists a family $s\in Z^{2q}(\mathcal{M},\mathcal{F})$ such that $$p^{C_0}_{|\sigma|}(s(\sigma))=c^{2q}_{s_0}(\sigma)~\forall\sigma\in\mathcal{N}(\mathcal{M})^{2q}.$$ 
For all $\tau\in \mathcal{N}(\mathcal{M})^{2q+2}$, we define $$f(s)(\tau):= \rho^{|\partial_{2q+1}\partial_{2q+2}\tau|}_{|\tau|}(s(\partial_{2q+1}\partial_{2q+2}\tau))=s(\partial_{2q+1}\partial_{2q+2}\tau)\mid_{|\tau|}.$$

Notice that $f(s)(\tau)\in\mathcal{F}(|\tau|)$, thus $f(s)\in C^{2q+2}(\mathcal{M},\mathcal{F})$. We can actually show that $f(s)$ is in $Z^{2q+2}(\mathcal{M},\mathcal{F})$ as follows.
Given an arbitrary $\nu\in \mathcal{N}(\mathcal{M})^{2q+3}$, we have

\begin{equation}\label{equ: part1}
\begin{split}
\delta^{2q+2}(f(s))(\nu) &= \sum_{k=0}^{2q+3}(-1)^k\rho^{|\partial_k\nu|}_{|\nu|}(f(s)(\partial_k\nu))=\sum_{k=0}^{2q+3}(-1)^k\rho^{|\partial_k\nu|}_{|\nu|}\rho^{|\partial_{2q+1}\partial_{2q+2}\partial_k\nu|}_{|\partial_k\nu|}(s(\partial_{2q+1}\partial_{2q+2}\partial_k\nu))\\
&=\sum_{k=0}^{2q+3}(-1)^k\rho^{|\partial_{2q+1}\partial_{2q+2}\partial_k\nu|}_{|\nu|}(s(\partial_{2q+1}\partial_{2q+2}\partial_k\nu))\\
&= \sum_{k=0}^{2q+1}(-1)^k\rho^{|\partial_{2q+1}\partial_{2q+2}\partial_k\nu|}_{|\nu|}(s(\partial_{2q+1}\partial_{2q+2}\partial_k\nu))+\rho^{|\partial_{2q+1}\partial_{2q+2}\partial_{2q+2}\nu|}_{|\nu|}(s(\partial_{2q+1}\partial_{2q+2}\partial_{2q+2}\nu))\\
&-\rho^{|\partial_{2q+1}\partial_{2q+2}\partial_{2q+3}\nu|}_{|\nu|}(s(\partial_{2q+1}\partial_{2q+2}\partial_{2q+3}\nu))\\
&
\end{split}
\end{equation}
Notice that the last two terms of te sum cancel out since $\partial_{2q+1}\partial_{2q+2}\partial_{2q+2} \nu=\partial_{2q+1}\partial_{2q+2}\partial_{2q+3}\nu$. Hence,
\begin{equation}\label{equ: part2}
\begin{split}
\delta^{2q+2}(f(s))(\nu) &\stackrel{\eqref{equ: part1}}{=}\sum_{k=0}^{2q+1}(-1)^k\rho^{|\partial_{2q+1}\partial_{2q+2}\partial_k\nu|}_{|\nu|}(s(\partial_{2q+1}\partial_{2q+2}\partial_k\nu))\\
&= \sum_{k=0}^{2q+1}(-1)^k\rho^{|\partial_k\partial_{2q+1}\partial_{2q+2}\nu|}_{|\nu|}(s(\partial_k\partial_{2q+1}\partial_{2q+2}\nu)),
\end{split}
\end{equation}
where the last equality is valid since now $0\leq k\leq 2q+1$ and therefore it is unimportant whether we cancel the $k$-th term before or after having canceled the $(2q+2)$-th and the $(2q+1)$-th. 
We can now relabel $\partial_{2q+1}\partial_{2q+2}\nu:=\tilde{\nu}\in\mathcal{N}(\mathcal{M})^{2q+1}$ 
and obtain
\begin{equation*}
\begin{split}
\delta^{2q+2}(f(s))(\nu) &\stackrel{\eqref{equ: part2}}{=} \sum_{k=0}^{2q+1}(-1)^k\rho^{|\partial_k\tilde{\nu}|}_{|\nu|}(s(\partial_k\tilde{\nu}))=\delta^{2q}(s)(\tilde{\nu})=0,
\end{split}
\end{equation*}
where the last equality is due to the fact that $s\in Z^{2q}(\mathcal{M},\mathcal{F})$. 

Let $\sigma\in\mathcal{N}(\mathcal{M})^{2q+2}$. We have
\begin{equation*}
\begin{split}
p^{C_0}_{|\sigma|}(f(s)(\sigma)) &= f(s)(\sigma)\mid_{|\sigma|\cap C_0}= s(\partial_{2q+1}\partial_{2q+2}\sigma)\mid_{|\sigma|\cap C_0}=s(\tilde{\sigma})\mid_{|\sigma|\cap C_0}=s(\tilde{\sigma})\mid_{|\tilde{\sigma}|\cap |\sigma|\cap C_0}\\
&= \left(s(\tilde{\sigma})\mid_{|\tilde{\sigma}|\cap C_0}\right)\mid_{|\sigma|}= \left(p^{C_0}_{|\tilde{\sigma}|}(s(\tilde{\sigma}))\right)\mid_{|\sigma|}=
\left( c^{2q}_{s_0}(\tilde{\sigma})\right)\mid_{|\sigma|}=\left(s_0\mid_{|\tilde{\sigma}|\cap C_0}\right)\mid_{|\sigma|}\\
&= s_0\mid_{|\tilde{\sigma}|\cap|\sigma|\cap C_0}=s_0\mid_{|\sigma|\cap C_0}=c^{2q+2}_{s_0}(\sigma).
\end{split}
\end{equation*}
By Lemma \ref{lem: general} this implies $\gamma_{C_0}^{q+1}(s_0)=0$.
\end{proof}

\begin{proof}[Proof of Proposition \ref{prop: non-contextuality}]
Consider an abelian presheaf $\mathcal{F}$ representing an empirical model $\mathcal{S}$ on a scenario $\langle X, \mathcal{M},O\rangle$, where $\mathcal{M}:=\{C_i\}_{i\in I}$.
Let $C_0\in\mathcal{M}$ be an arbitrary context, and $s_{C_0}\in \mathcal{F}(C_0)$ an arbitrary section. By no-signalling, there exists a family $\{s_{C_i}\in \mathcal{F}(C_i)\}_{i\in I}$ such that $
s_{C_i}\mid_{C_i\cap C_0}= s_{C_0}\mid_{C_i\cap C_0}$ for all $i$. We define $z\in C^2(\mathcal{M},\mathcal{F})$ by the expression
\begin{equation*}
z(\omega):= s_{\partial_0\partial_2\omega}\mid_{|\omega|}~\in \mathcal{F}(|\omega|)~~\forall\omega\in\mathcal{N}(\mathcal{M})^2.
\end{equation*}
More explicitly, given an $\omega:= (C_i, C_j, C_k)\in\mathcal{N}(\mathcal{M})^2$, we define
\begin{equation*}
z(C_i,C_j,C_k):= s_{C_j}\mid_{C_i\cap C_j\cap C_k}~\in\mathcal{F}(C_i\cap C_j\cap C_k).
\end{equation*}
Given a general $\sigma:=(C_i, C_j, C_k, C_l)\in\mathcal{N}(\mathcal{M})^3$, we have

\begin{equation*}
\begin{split}
\delta^2(z)(\sigma)&= z(C_j, C_k, C_l)\mid_{|\sigma|}-z(C_i, C_k, C_l)\mid_{|\sigma|}+z(C_i, C_j, C_l)\mid_{|\sigma|}-z(C_i, C_j, C_k)\mid_{|\sigma|}\\
&= s_{C_k}\mid_{|\sigma|}-s_{C_k}\mid_{|\sigma|} + s_{C_j}\mid_{|\sigma|} - s_{C_j}\mid_{|\sigma|}=0,
\end{split}
\end{equation*}
thus $z\in Z^2(\mathcal{M},\mathcal{F})$. Moreover, for any general $\omega=(C_i, C_j, C_k)\in\mathcal{N}(\mathcal{M})^2$ we have
\begin{equation*}
\begin{split}
p^{C_0}_{|\omega|}(z(\omega))&=z(\omega)\mid_{|\omega|\cap C_0}=s_{C_j}\mid_{C_i\cap C_j\cap C_k\cap C_0}= \left(s_{C_j}\mid_{C_j\cap C_0}\right)\mid_{C_i\cap C_j\cap C_k\cap C_0}= \left( s_{C_0}\mid_{C_j\cap C_0}\right)\mid_{|\omega|\cap C_0}\\
&= s_{C_0}\mid_{|\omega|\cap C_0}= c^2_{s_{C_0}}(\omega).
\end{split}
\end{equation*}
By Lemma \ref{lem: general}, this result implies $\gamma_{C_0}^1(s_{C_0})=0$, and by Proposition \ref{thm: hierarchy}, we conclude $\neg \mbox{CLC}^q(\mathcal{S})$ for all $q>0$. 
\end{proof}

\section{Proof of Proposition \ref{prop: bijection}}\label{apx: proofs2}

\begin{proof}[Proof of Proposition \ref{prop: bijection}]

Let $T\in \mbox{Trs}(\mathcal{M},\mathcal{F})$. We arbitrarily choose a collection $\{t_i\in T(C_i)\}_{i\in I}$\footnote{This is possible since $\mathcal{M}$ trivialises $T$.}. By simple transitivity, for all $i,j\in I$ there exists a unique $g_{ij}\in\mathcal{F}(C_i\cap C_j)$ such that $g_{ij}\LargerCdot t_j\mid_{C_i\cap C_j}=t_i\mid_{C_i\cap C_j}$. We also have

\begin{equation*}
\begin{split}
(g_{kj}\mid_{C_i\cap C_j\cap C_k}+g_{ji}\mid_{C_i\cap C_j\cap C_k})\LargerCdot t_i\mid_{C_i\cap C_j\cap C_k} &=
g_{kj}\mid_{C_i\cap C_j\cap C_k}\LargerCdot\left(g_{ji}\LargerCdot t_i\mid_{C_i\cap C_j}\right)\mid_{C_i\cap C_j\cap C_k}\\
&= g_{kj}\mid_{C_i\cap C_j\cap C_k}\LargerCdot\left(t_j\mid_{C_i\cap C_j}\right)\mid_{C_i\cap C_j\cap C_k}\\
&= \left(g_{kj}\LargerCdot t_j\mid_{C_j\cap C_k}\right)\mid_{C_i\cap C_j\cap C_k}= t_k\mid_{C_i\cap C_j\cap C_k}= g_{ki}\LargerCdot t_i\mid_{C_i\cap C_j\cap C_k},
\end{split}
\end{equation*}
which implies $g_{kj}	\mid_{C_i\cap C_j\cap C_k}+ g_{ji}\mid_{C_i\cap C_j\cap C_k}=g_{ki}\mid_{C_i\cap C_j\cap C_k}$ for all $i,j,k\in I$ by simple transitivity. This equation tells us that $\check{T}$, defined by $\check{T}(C_i, C_j):=g_{ij}$ for all $i,j\in I$, is a $1$-cocycle. Let

\[
f: \mbox{Trs}(\mathcal{M},\mathcal{F}) \rightarrow \check{H}^1(\mathcal{M},\mathcal{F}):: T\mapsto [\check{T}].
\]

In order to show that this map is well-defined, we need to prove that $\check{T}$ is independent of the choice of the family $\{t_i\}_{i\in I}$. Suppose we choose $\{t_i'\in T(C_i)\}_{i\in I}$ instead, then we obtain a family $\{g'_{ij}\in\mathcal{F}(C_i\cap C_j)\}_{i,j\in I}$ as before. By simple transitivity, for each $i\in I$ there exists an element $g_i\in\mathcal{F}(C_i)$ such that $g_i\LargerCdot  t_i'=t_i$. Thus, we obtain a family $g:=\{g_i\in\mathcal{F}(C_i)\}_{i\in I}$. We have
\begin{equation*}
\begin{split}
\left(g_i\mid_{C_i\cap C_j}+g'_{ij}\right) \LargerCdot  t'_j\mid_{C_i\cap C_j} &= g_i\mid_{C_i\cap C_j}\LargerCdot \left(g'_{ij}\LargerCdot  t'_j\mid_{C_i\cap C_j}\right)=g_i\mid_{C_i\cap C_j}\LargerCdot  t_i'\mid_{C_i\cap C_j}=t_i\mid_{C_i\cap C_j},~~\forall i,j\in I.
\end{split}
\end{equation*}
On the other hand, 
\begin{equation*}
\begin{split}
\left(g_{ij}+g_j\mid_{C_i\cap C_j}\right) \LargerCdot  t'_j\mid_{C_i\cap C_j} = g_{ij}\LargerCdot \left(g_j\mid_{C_i\cap C_j}\LargerCdot  t'_j\mid_{C_i\cap C_j}\right)=g_{ij}\LargerCdot  t_j\mid_{C_i\cap C_j}=t_i\mid_{C_i\cap C_j},~~\forall i,j\in I.
\end{split}
\end{equation*}
Again, by simple transitivity, this implies $g_i\mid_{C_i\cap C_j}+g'_{ij}=g_{ij}+g_j\mid_{C_i\cap C_j}$ for all $i,j\in I$, which is equivalent to say $\delta^0(g)(C_i, C_j)=g'_{ij}-g_{ij}$ for all $i,j\in I$. Consequently, it does not matter whether we define $\check{T}(C_i, C_j):= g_{ij}$ or $\check{T}(C_i, C_j):= g'_{ij}$ since these two $1$-cocycles are cohomologous.

Notice that $f$ maps the trivial $\mathcal{F}$-torsor to $0\in\check{H}^1(\mathcal{M},\mathcal{F})$, thus it is a morphism of pointed sets. To prove that $f$ is a bijection, we introduce an inverse $g: \check{H}^1(\mathcal{M},\mathcal{F})\rightarrow \mbox{Trs}(\mathcal{M},\mathcal{F})$. Given $[z]\in\check{H}^1(\mathcal{M},\mathcal{F})$, we define the presheaf $g([z]):\textbf{Open}(X)^{op}\rightarrow \textbf{AbGrp}$ by the expression

\begin{equation*}
g([z])(U):=\left\{ (t_i)_{i\in I}\in \bigoplus_{i\in I}\mathcal{F}(C_i\cap U) \mathrel{\Bigg|} t_i\mid_{C_i\cap C_j\cap U}-t_j\mid_{C_i\cap C_j\cap U}=z(C_i, C_j)\mid_{C_i\cap C_j\cap U},\forall i,j\in I\right\},
\end{equation*}
for any $U\subseteq X$. The restriction maps are given by
$g([z])(U\subseteq U'):: (t'_i)_{i\in I}\mapsto (t_i'\mid_{C_i\cap U})_{i\in I}$.
We define an $\mathcal{F}$-action on $g([z])$ by the expression
$g\LargerCdot (t_i)_{i\in I}:= (t_i-g\mid_{C_i\cap U})_i$, for any $g\in\mathcal{F}(U)$. 

 We need to show that $g([z])\in\mbox{Trs}(\mathcal{M},\mathcal{F})$. To do so, we show that for any context $C_j\in\mathcal{M}$, there exists an isomorphism of $\mathcal{F}\mid_{C_j}$-presheaves $\mathcal{F}\mid_{C_j}\Rightarrow g([z])\mid_{C_j}$ (recall that $\mathcal{F}$ denotes the trivial $\mathcal{F}$-torsor). Consider a $U\subseteq C_j$. The map

 \[h^j_U: \mathcal{F}\mid_{C_j}(U)  \rightarrow  g([z])\mid_{C_j}(U):: g \mapsto  \left(z(C_i, C_j)\mid_{C_i\cap C_j \cap U}-g\mid_{C_i\cap C_j\cap U}\right)_{i\in I}\]
is an isomorphism with inverse

\[
k^j_U: g([z])\mid_{C_j}(U) \rightarrow \mathcal{F}\mid_{C_j}(U) :: (t_i)_{i\in I} \mapsto -t_j.
\]
In fact, $h_U^j$ is equivariant since
\begin{equation*}
\begin{split}
g\LargerCdot h^j_U(h) &=g\LargerCdot \left(z(C_i, C_j)\mid_{C_i\cap C_j \cap U}-h\mid_{C_i\cap C_j\cap U}\right)_{i\in I}=\left( z(C_i, C_j)\mid_{C_i\cap C_j \cap U}-h\mid_{C_i\cap C_j\cap U}-g\mid_{C_i\cap C_j\cap U}\right) \\
&= h_U^j(\mathcal{U}(g+h))=h_U^j(g\LargerCdot h),
\end{split}
\end{equation*}
where the last action is the one of the trivial $\mathcal{F}$-torsor.
Moreover, $k_U^j$ is indeed the inverse of $h_U^j$:
\begin{equation*}
\begin{split}
h^j_U\left(k^j_U\left((t_i)_{i\in I}\right)\right) &= h_U(-t_j)=\left(z(C_i, C_j)\mid_{C_i\cap C_j \cap U}+t_j\right)_{i\in I}=\left(t_i-t_j+t_j\right)_{i\in I}=(t_i)_{i\in I},
\end{split}
\end{equation*}
and
\begin{equation*}
\begin{split}
k^j_U(h^j_U(g)) =k_U\left(\left(z(C_i, C_j)\mid_{C_i\cap C_j \cap U}-g\mid_{C_i\cap C_j\cap U}\right)_{i\in I}\right)= -z(C_j, C_j)\mid_{C_j\cap U}+g=g,
\end{split}
\end{equation*}
where the last equality is due to the fact that $z$ is a $1$-cocycle.
Since $\mathcal{F}\mid_{C_j}\cong g([z])\mid_{C_j}$ for all contexts $C_j$, we now that $g([z])$ is an $\mathcal{F}$-torsor trivialised by the measurement cover $\mathcal{M}$.

We also need to show that the definition of $g$ is independent of the choice of the representative $z$ of the $1$-cocycle $[z]$. Suppose we take a cohomologous $1$-cocycle $z'$. Then there exists a family $h:=\{h_i\in\mathcal{F}(C_i)\}_{i\in I}$ such that $z'(C_i, C_j)-z(C_i, C_j)=\delta^0(h)$. 
Then we can define an isomorphism of $\mathcal{F}$-torsors $g([z])\cong g([z'])$ induced by the maps
\begin{equation*}
\psi_U: g([z])(U)\rightarrow g([z'])(U):: (t_i)_{i\in I}\mapsto (h_i\mid_{C_i\cap U}+t_i)_{i\in I}.
\end{equation*}
In fact, this map is equivariant since
\begin{equation*}
\begin{split}
g\LargerCdot\psi_U\left((t_i)_{i\in I}\right) &=g\LargerCdot\left((h_i\mid_{C_i\cap U}+t_i)_{i\in I}\right)= \left(h_i\mid_{C_i\cap U}+t_i-g\mid_{C_i\cap U}\right)_{i\in I}\\
&= \psi_U\left(\left(t_i-g\mid_{C_i\cap U}\right)_{i\in I}\right)=\psi_U\left(g\LargerCdot(t_i)_{i\in I}\right),
\end{split}
\end{equation*}
and its inverse is clearly
\begin{equation*}
g([z'])(U)\rightarrow g([z])(U):: (t'_i)_{i\in I}\mapsto (t'_i-h_i\mid_{C_i\cap U})_{i\in I}.
\end{equation*}

We can finally show that $g$ is the inverse of $f$. 
\begin{itemize}

\item Let $T\in\mbox{Trs}(\mathcal{M},\mathcal{F})$. We want to show that $T\cong g([\check{T}])$. Let $U\subseteq X$, and suppose that $\check{T}$ is defined with respect to the family $\{t_i\in T(C_i)\}_{i\in I}$. Consider an element $s\in T(U)$ and the induced family $\{s_i\in T(U\cap C_i)\}_{i\in I}:= \{s\mid_{C_i\cap U}\}_{i\in I}$.\footnote{Note the similarities with the construction of cohomological obstruction in \cite{Abramsky3}, where we take a no-signalling family for the initial section.} By simple transitivity, for each $i\in I$ there is a unique $g_i\in\mathcal{F}(C_i\cap U)$ such that $g_i\LargerCdot s_i=t_i\mid_{C_i\cap U}$. This allows us to define the isomorphism
\[
\phi_U: T(U)\rightarrow g([\check{T}])(U):: s\rightarrow (g_i)_{i\in I}
\]
We leave to the reader the rather simple verification of the fact it is actually an isomorphism, but we explicitly show that it is equivariant. To see this, let $h\in\mathcal{F}(U)$. We have $\phi_U(h\LargerCdot s)=(k_i)_{i\in I}$, where, for all $i\in I$, $k_i$ is the unique element in $\mathcal{F}(C_i\cap U)$ such that $k_i\LargerCdot (h\LargerCdot s)\mid_{C_i\cap U}=t_i\mid_{C_i\cap U}$. More explicitly, $k_i$ is the unique element such that 
\[
k_i\LargerCdot(h\mid_{C_i\cap U}\LargerCdot s_i)=t_i\mid_{C_i\cap U},
\]
which is equivalent to
\[
(k_i+h\mid_{C_i\cap U})\LargerCdot s_i=t_i\mid_{C_i\cap U}.
\]
On the other hand, $h\LargerCdot \phi_U(s)=h\LargerCdot (g_i)_{i\in I}=(g_i-h\mid_{C_i\cap U})_{i\in I}$. Since

\[
(g_i-h\mid_{U\cap C_i})\LargerCdot(h\mid_{C_i\cap U}\LargerCdot s_i)= (g_i-h\mid_{C_i\cap U}+h\mid_{C_i\cap U})\LargerCdot s_i=g_i\LargerCdot s_i=t_i\mid_{C_i\cap U},
\]
we conclude that by simple transitivity that $k_i=g_i-h\mid_{U\cap C_i}$ for all $i\in I$, which leads to $h\LargerCdot \phi_U(s)=\phi_U(h\LargerCdot s)$. 
\item Let $[z]\in\check{H}^1(\mathcal{M},\mathcal{F})$. We want to show that $f(g([z]))=[z]$. We construct the family $\{t_k\in g([z])(C_k)\}$ given by $t_k:=(z(C_i, C_k))_{i\in I}$ and we use it to define $f(g([z]))$ by setting, for all $i,j\in I$, $f(g([z]))(C_i, C_j)$ to be the unique element $g_{ij}\in\mathcal{F}(C_i\cap C_j)$ such that $g_{ij}\LargerCdot t_j\mid_{C_i\cap C_j}=t_i\mid_{C_i\cap C_j}$.
Notice that 
\begin{equation*}
\begin{split}
z(C_l, C_k)\LargerCdot t_k\mid_{C_l\cap C_k} &=z(C_l, C_k)\LargerCdot (z(C_i, C_k)\mid_{C_i\cap C_l\cap C_k})_{i\in I}=(z(C_i, C_k)\mid_{C_i\cap C_l\cap C_k}-z(C_l, C_k)\mid_{C_i\cap C_l\cap C_k})_{i\in I}\\
&=(z(C_i, C_l))_{i\in I}=t_l\mid_{C_l\cap C_k}.
\end{split}
\end{equation*}
Therefore, by simple transitivity, $g_{ij}=z(C_i, C_j)$ for all $i,j\in I$, proving $f(g([z]))=[z]$.

\end{itemize}
\end{proof}

\end{document}